\def\BibTeX{{\rm B\kern-.05em{\sc i\kern-.025em b}\kern-.08em
    T\kern-.1667em\lower.7ex\hbox{E}\kern-.125emX}}
\newtheorem{theorem}{Theorem}[section]
\newtheorem{definition}[theorem]{Definition}
\newtheorem{example}[theorem]{Example}
\newtheorem{corollary}[theorem]{Corollary}
\newtheorem{proposition}[theorem]{Proposition}
\newtheorem{lemma}[theorem]{Lemma}
\newcommand{\gnn}{\textsc{Gnn}\xspace}
\newcommand{\gnns}{\textsc{Gnns}\xspace}
\begin{document}

\title{A Game for Counting Logic Formula Size and an Application to Linear Orders}

\author{
    \IEEEauthorblockN{Gr\'egoire Fournier}
    \IEEEauthorblockA{
        \textit{University of Illinois at Chicago} \\
        Chicago, USA \\
        gfourn2@uic.edu
    }
    \and
    \IEEEauthorblockN{Gy\"orgy Tur\'an}
    \IEEEauthorblockA{
        \textit{University of Illinois at Chicago} \\
        Chicago, USA \\
        \textit{ELRN Research Group on AI} \\
        Szeged, Hungary \\
        gyt@uic.edu
    }
}
\maketitle

\begin{abstract}
Ehrenfeucht - Fraïssé (EF) games are a basic tool in finite model theory for proving definability lower bounds, with many applications in complexity theory and related areas.
They have been applied to study various logics, giving insights on quantifier rank and other logical complexity measures. In this paper, we present an EF game to capture formula size in counting logic with a bounded number of variables.
The game combines games introduced previously for counting logic quantifier rank due to Immerman and Lander, and for first-order formula size due to Adler and Immerman, and Hella and V\"a\"an\"anen. The game is used to prove the main result of the paper, an extension of a formula size lower bound of Grohe and Schweikardt for distinguishing linear orders, from 3-variable first-order logic to 3-variable counting logic. As far as we know, this is the first formula size lower bound for counting logic.
\end{abstract}

\begin{IEEEkeywords}
Finite Model Theory, First-Order Logic, Counting Logic, Succinctness
\end{IEEEkeywords}

\section{Introduction}
\label{sec:Introduction}
Ehrenfeucht - Fraïssé (EF) games \cite{Fraisse54, Ehrenfeucht1961} are a basic tool of finite model theory for proving definability lower bounds~\cite{book_libkin,ebbing_flum}.
Combined with logical characterizations of complexity classes, they provide a logic-based approach to problems in complexity theory.
The original form of EF games gives bounds for quantifier rank in first-order logic 
but the games have been extended and modified for many logics and formula complexity measures.
An EF game for \emph{formula size} in first-order logic (FO) is given by Hella and V\"a\"an\"anen~\cite{Hella2012}, building on Adler and Immerman~\cite{n!_lwb_fmulasize}. 
In what follows, we refer to this game as the HV-game.

Counting logic extends first-order logic by adding the counting quantifier $\exists^{\geq k}$, and is frequently used in complexity theory and combinatorics. Counting logic turns out to be relevant for understanding the computational power of graph neural networks (\gnn) as well \cite{logic_gnn}. An EF game for distinguishing graphs in counting logic with a bounded number of variables is formulated by Immerman and Lander~\cite{immerman1990describing}. The game~\cite{immerman1990describing} extends the basic EF setup by an additional phase in each round involving the choice of subsets of the same cardinality in the two structures.

Understanding \emph{formula size in counting logic with a bounded number of variables} would be useful, in particular, for a further analysis of the logical characterizations of \gnns~(Barcel\'o et al.~\cite{Barcelo20}). An EF game for this setup could be a useful tool in this endeavor.

While the standard EF game is played on two structures, the HV game is played on two \emph{sets of structures}, referred to in this paper as \emph{families}. 
Grohe and Schweikardt~\cite{grohe_succinc}
prove a formula size lower bound for linear orders for the 3-variable fragment of first-order logic, using this technique implicitly.
They show that every first-order 3-variable formula that distinguishes a linear order of size $n$ from a larger one has size $\Omega(\sqrt{n})$. Their proof
is based on the notion of a \emph{separator} and an involved weighting technique, which allows for a refined analysis of the syntax tree of a formula.

The separator and the weighting scheme are useful tools to gauge the progress made in the subformulas of a distinguishing formula. The proof is a detailed case analysis, with numerous subcases required to deal with quantifiers.

\bigskip

In this paper, we formulate a game for capturing formula size complexity for counting logic (Theorem~\ref{theorem:counting_size}). The game is a combination of the Immerman-Lander and Hella-V\"a\"an\"anen games mentioned above. Restricted versions characterize formula size for fragments of counting logic where the \emph{number of variables} and the \emph{counting rank} are bounded.

The main result of the paper is
that every 3-variable counting logic formula with counting rank $t$ distinguishing a linear order of size $n$ from a larger one has size at least $\sqrt{n}/t$ (Theorem~\ref{theorem:result_counting}). 
This result extends the result of~\cite{grohe_succinc}
from first-order logic to counting logic. The theorem is proved using the game characterization. There is a simple distinguishing formula of size $n/t$ (Proposition ~\ref{proposition:upper_bound}). In addition, for the case $t=1$, i.e., for 3-variable FO, our result improves the formula size lower bound of~\cite{grohe_succinc} from $\sqrt{n}/2$ to $\sqrt{n}$. 


The lower bound proof uses modifications of the separators and the weighting scheme. Some cases considered are identical to~\cite{grohe_succinc}. The overall argument is, however, different. The main difference is in the most technical part, the proof of the quantifier case. We introduce \textit{gap sets} and \textit{gap variables} (Definitions~\ref{definition:gap} and \ref{definition:gap_variables}), which makes the argument more similar to standard EF arguments and generalizes the reasoning. The new proof of the FO case is somewhat simpler than the original.
A tree summarizing the proof structure and the different cases is given in Fig. \ref{fig:proof_structure}.


The paper is structured as follows. 
After reviewing related work in Section~\ref{sec:rel}, we describe the counting and HV games in Section~\ref{sec:back}. Section~\ref{sec:prel} describes the game and Section~\ref{sec:gchar} gives the correspondence between the game and counting logic formula size. Section~\ref{sec:linear_order_lower_bound} contains the application on linear orders, with the proof of the lower bound in Section~\ref{section:proof}. Our main technical contribution, the proof of the main Lemma on counting quantifiers, is presented separately in Section \ref{sec:sep}. Finally, Section~\ref{sec:conc} contains remarks and directions for future research.
\section{Related work} \label{sec:rel}

Counting logic has been discussed in several different forms.
Grohe~\cite{logic_gnn} defines counting logic ${\mathcal C}$ as first-order logic (FO) extended by counting quantifiers of the form $\exists^{\geq k} x$, and ${\mathcal C}_m$ as its fragment using at most $m$ variables. This is the counting logic we consider in this paper. Previous work using this kind of counting logic includes Immerman and Lander~\cite{immerman1990describing} and Cai et al.~\cite{opt_graph_id}. Grohe~\cite{Grohe23}, on the other hand, considers a more powerful counting logic, where formulae can include arithmetic operations on the number of elements satisfying a formula (see also Kuske and Schweikardt~\cite{Kuske17}).

EF games using sets of structures, capturing the number of quantifiers as opposed to quantifier rank, have been proposed by Immerman~\cite{Immerman81}. These \textit{multi-structural (MS)} games receive increasing current attention (Fagin et al.\cite{fagin2022number}, Carmosino et al.~\cite{carmosino2023finer}, Vinall-Smeeth ~\cite{vinallsmeeth2024quantifierdepthquantifiernumber}). HV games are essentially extensions of MS games, also modelling Boolean connectives in the formulae.

A graph neural network (\gnn) is a variant of  neural networks for machine learning problems involving graphs~\cite{gnn_first}. Such a network allows the use of deep learning techniques to classify graphs (graph classification), or to classify the nodes of a large graph (node classification). The computational power of \gnn is closely related to the Weisfeiler - Leman (WL) graph isomorphism algorithm (Morris et al.~\cite{Morris19}, Xu et al.~\cite{Xu19}). The connection of the WL algorithm to counting logic with a bounded number of variables~(\cite{opt_graph_id,logic_gnn}) brings these logics into the \gnn picture as well. Barcel\'o et al.~\cite{Barcelo20} gave logical characterizations in terms of counting logic with a bounded number of variables using results established in modal logic (Otto~\cite{otto_counting_bisimulation}). The complexity aspects of the characterizations are not discussed in~\cite{Barcelo20}, and studying this aspect (also pointed out in Grohe~\cite{logic_gnn}) has been a motivation for the topic of this paper (a brief further discussion is given at the end of the paper).

The \gnn characterizations of Grohe~\cite{Grohe23}
establish a connection of \gnn to threshold circuits,
a Boolean circuit model of neural networks. The computational power of such circuits corresponds to counting logic with an arbitrary built-in predicate. Proving superpolynomial lower bounds for threshold circuits is an open problem. Hajnal et al.~\cite{Hajnal87,Hajnal93} prove an exponential lower bound for depth-2 circuits with polynomial weights,
and so far this lower bound has not been extended to either depth-2 with unrestricted weights or to higher depths. The same papers prove a quantifier rank lower bound for counting logic with successor as the built-in relation. Similar results are also given in Etessami~\cite{Etessami95,Etessami97}.  Karchmer and Wigderson~\cite{Karchmer90} formulate an approach, related to HV games, to proving monotone formula depth lower bounds. They also prove a depth version of the Krapchenko formula size lower bound.
Krapchenko's Theorem is proved in ~\cite{Hella2012}
as an application of HV games.

General background for the topic of this paper is given in Immerman~\cite{Immerman_desc}, Ebbinghaus and Flum~\cite{ebbing_flum}, Libkin~\cite{book_libkin}, Otto~\cite{otto_book} and Hamilton~\cite{Hamilton20}.
\section{Background} \label{sec:back}

In this section we introduce basic notation used in the paper and review EF and HV games.

\subsection{Basic definitions}

\subsubsection{Logics}

We consider relational structures over a fixed vocabulary.\\
Counting logic $\mathcal{C}$ is obtained by extending first-order logic with counting quantifiers $\exists^{\ge k} x \, \phi(x)$ and $\forall^{\geq k} x \, \phi(x)$. Here $\exists^{\ge k} x \phi(x)$ 
means that there are at least $k$ distinct assignments to the variable $x$ that satisfy $\phi$. Thus $\exists^{\ge k} \, \phi(x)$ is logically equivalent to $\exists x_1 \dotsc \exists x_k (\bigwedge_i \phi(x_i) \wedge \bigwedge_{i,j} x_i\neq x_j)$.
The quantifier $\forall^{\geq k} x \phi(x)$ stands for $\neg \exists^{\geq k} x \neg \phi(x)$.
In $\exists^{\geq k} x \phi(x)$ and $\forall^{\geq k} x \phi(x)$, $k$ is referred to as the \emph{counting rank} of the quantifier. The counting rank of a formula is the maximum counting rank of its quantifiers. 

As a counting quantifier can be replaced by standard quantifiers, adding counting quantifiers does not change the expressivity of first-order logic.
It does, however, impact the succinctness, the minimum size of formulae expressing a property.
In applications to finite model theory one usually considers a sequence $({\mathcal A}_n, {\mathcal B}_n)$ of pairs of structures. 
Complexity bounds to be proven are also functions of $n$. Note that $\forall^{\geq k} x \, \varphi(x)$
is equivalent to $\exists^{\geq n - k + 1} x \, \varphi(x)$ for an $n$-element structure. 
The transformation increases counting rank and thus it cannot be used in formula size bounds for the bounded counting rank case. 

Parameters to be considered are the bound $m$ on the number of variables,
the bound $t$ on the counting rank, and the bound $w$ on the formula size. The fragment of counting logic of formulae containing at most $m$ variables and counting rank at most $t$ is denoted by $\mathcal{C}^t_m$.
\subsubsection{Structures and Families}

The universe of a structure $\mathcal{A}$ is denoted by $\mathcal{U}^\mathcal{A}$. We use $x_j$, $j \in \mathbb{N}$, to denote variables. A variable assignment for a structure $\mathcal{A}$ is a finite partial mapping $\alpha : \mathbb{N} \to  \mathcal{U}^\mathcal{A}$. The finite
domain of $\alpha$ is denoted by $dom(\alpha)$.

An \emph{interpretation} is a pair $(\mathcal{A}, \alpha)$. For a formula $\phi$, $(\mathcal{A}, \alpha) \models \phi$ means that the assignment $\alpha$ satisfies the formula $\phi$ in the structure $\mathcal{A}$, with  $dom(\alpha)$ containing all the $j$ for which the variable $x_j$ is free in $\phi$. A formula $\phi$ \emph{distinguishes}  interpretations $(\mathcal{A}, \alpha)$ and $(\mathcal{B}, \beta)$, denoted by $((\mathcal{A}, \alpha), (\mathcal{B}, \beta)) \models \phi$,  if $(\mathcal{A}, \alpha) \models \phi$ and $(\mathcal{B}, \beta) \models \neg\phi$.\\
A \emph{family} $A_{\mathcal{A}, D}$ is a set of interpretations $\{(\mathcal{A}, \alpha_i) | \, i \in \Gamma \}$ where
    $\mathcal{A}$ and $D = dom(\alpha_i)$ are fixed and $\Gamma$ is some set. When the context is clear, we drop the subscript.
    We write $(A, B) \models \phi$ to express that for all $(\mathcal{A}, \alpha) \in A$, $(\mathcal{A}, \alpha)  \models \phi$ and  for all $ (\mathcal{B}, \beta) \in B$, $(\mathcal{B}, \beta)  \models \neg \phi$, and we say that $\phi$ distinguishes $(A,B)$. For a structure $\mathcal{A}$, we  denote by $A_0$ the family $\{(\mathcal{A}, \emptyset)\}$,  containing a single interpretation with 
    the empty assignment.

\subsubsection{Operations}
If $\alpha$ is an assignment on $\mathcal{A}$, $a \in \mathcal{U}^{\mathcal{A}}$ and $j \in \mathbb{N}$, then $\alpha(a/j)$ is
the assignment that maps $j$ to $a$ and agrees with $\alpha$ otherwise.

Given a family $A$, a \emph{choice function} is of the form 
$F \, : \, A \to \mathcal{U}^{\mathcal{A}}$.
The set of all choice functions on the family $A$ is denoted by $F_A$. We define two operations on families~\cite{Hella2012}. 

\begin{itemize}
\item \emph{Change}: Given a family $A$, a choice function $F \in F_A$ and $j \in \mathbb{N}$, the change operation on $A$ with $F$ for the variable $x_j$ produces the family \[A(F/j):= \{(\mathcal{A}, \alpha(F(\mathcal{A},\alpha)/j)) : (\mathcal{A}, \alpha) \in A\}.\] 
In the new family, the assignment to $x_j$ is changed based on the choice function $F$.
If $j \not\in D$ then $x_j$ is a new variable, and $D$ is updated to $D \cup \{j\}$. Thus a change operation may either leave the domain unchanged or add a new element to it.

\item \emph{Multiplication}: Given a family $A$ and $j \in \mathbb{N}$, the multiplication operation $A$ for the variable $x_j$ produces the family
\[A(*/j):= \{(\mathcal{A}, \alpha(a/j)) : (\mathcal{A}, \alpha) \in A, a \in \mathcal{U}^\mathcal{A}\}.\]
The new family consists of interpretations with $x_j$ assigned to all possible values in
$\mathcal{U}^\mathcal{A}$. Here, again, the domain is either unchanged or a new element is added to it.
\end{itemize}

\subsubsection{Formula complexity}The size of a formula is defined inductively: if $\phi$ is an atomic formula, $|\phi|=1$; and for FO formulae $\phi,\psi$, $|\neg \phi|=|\phi| + 1$; $|\phi \vee \psi|= |\phi \wedge \psi|=|\phi|+|\psi|$; $|\exists x_j \phi|= |\forall x_j \phi| = 1+ |\phi|$. For the definition of the quantifier rank of a formula, we refer the reader to \cite{ebbing_flum}.

\subsection{Review of games}
\label{games_review}
In this section we review the counting logic 
game~\cite{immerman1990describing} and the first-order logic formula size game~\cite{Hella2012}, referred to as the HV game.

\begin{definition}[$r$-round EF $m$-pebbling game]
\label{pebble_game}
The game $EF(\mathcal{A},\mathcal{B})$ is played on two relational structures $\mathcal{A}$ and $\mathcal{B}$.  There are two players, Spoiler and Duplicator, and $m$ pairs of pebbles $(a_i,b_i)$ for $i \in [m]$. It goes as follows:
\begin{itemize}
    \item For $r$ rounds:
    \begin{itemize}
    \item Spoiler picks a set of elements $S_1$ of $\mathcal{A}$ or $\mathcal{B}$ and a number $i \in [m]$.
    Duplicator selects a set $S_2$ of elements of the same cardinality in the other structure.
    \item Spoiler picks an element in $S_2$ and Duplicator selects an element in $S_1$. The pebble $a_i$ (resp., $b_i$) holds the value of the element picked in $\mathcal{A}$ (resp., $\mathcal{B}$). 

    \end{itemize}
    \item The $r$-round game ends in the position $\vec{a}=(a_1,\dotsc, a_m)$, $\vec{b}=(b_1,\dotsc, b_m)$.
    Duplicator wins if the mapping from\\ $((\vec{a},\vec{c}^\mathcal{A})$ to $(\vec{b},\vec{c}^\mathcal{B}))$ is a partial isomorphism between $\mathcal{A}$ and $ \mathcal{B}$, where $\vec{c}$ denotes the constants of the language. 

\end{itemize}
\end{definition}
\begin{theorem}
\label{theorem:EF_counting} 
The following are equivalent:
\begin{itemize}
\item $\mathcal{A}$ and $\mathcal{B}$ satisfy the same sentences of $\mathcal{C}_m$ of quantifier rank at most $r$.
\item Duplicator has a winning strategy in the $r$-round $m$-pebbling game.
\end{itemize}
\end{theorem}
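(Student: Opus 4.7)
The plan is to induct on the number of remaining rounds $r$, after strengthening the statement to allow already-placed pebbles: $(\mathcal{A}, \vec{a})$ and $(\mathcal{B}, \vec{b})$ satisfy the same $\mathcal{C}_m$-formulas of quantifier rank at most $r$ with free variables among $x_1, \ldots, x_m$ if and only if Duplicator wins the remaining $r$-round $m$-pebbling game from this position. The base case $r=0$ is immediate: the relevant formulas are Boolean combinations of atomic formulas in $x_1,\ldots,x_m$, and these are preserved exactly when the induced pebble map is a partial isomorphism, which coincides with Duplicator's winning condition.

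For the direction ``Duplicator wins $\Rightarrow$ formula equivalence'' I would proceed by a secondary induction on the formula. Atomic and Boolean cases are routine. In the counting case $\exists^{\geq k} x_i\,\phi$, if the sentence holds in $(\mathcal{A},\vec{a})$ but not in $(\mathcal{B},\vec{b})$, I would have Spoiler play a witness set $S_1 \subseteq \mathcal{U}^{\mathcal{A}}$ of size $k$ with pebble $i$. Any response $S_2$ by Duplicator must contain a non-witness $b$; Spoiler picks this $b$, and whatever $a \in S_1$ Duplicator answers with, the resulting position still has a winning Duplicator strategy. By the outer induction on $r$, this gives rank-$(r-1)$ equivalence, contradicting $(\mathcal{A},\vec{a}[a/i]) \models \phi$ and $(\mathcal{B},\vec{b}[b/i]) \not\models \phi$.

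For the converse, the plan is to use characteristic formulas $\chi^{r-1}_{(\mathcal{A},\vec{c})} \in \mathcal{C}_m$ of rank at most $r-1$ pinning down each rank-$(r-1)$ equivalence class; these exist because, over finite structures with bounded variables, there are only finitely many rank-bounded formulas up to logical equivalence. The crucial point is that since the two interpretations agree on every rank-$r$ sentence, they in particular agree on each sentence $\exists^{\geq j} x_i\, \chi^{r-1}_{\tau}$, which forces the multisets of rank-$(r-1)$ types of elements in $\mathcal{A}$ and $\mathcal{B}$ to coincide exactly. Duplicator's strategy against a Spoiler move $S_1$ with pebble $i$ is then to choose $S_2$ together with a type-preserving bijection $f : S_1 \to S_2$; when Spoiler picks $b \in S_2$, Duplicator answers with $f^{-1}(b)$, producing a rank-$(r-1)$ equivalent position to which the induction hypothesis applies.

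The step I expect to be the main obstacle is the careful construction of the characteristic formulas inside the fragment $\mathcal{C}_m$: variable names have to be reused as pebbles get overwritten across rounds, and the counting-quantifier argument must be tightened to give exact equality of type counts rather than just lower bounds. This exact-counting phenomenon is precisely what distinguishes the counting-logic setting from the standard EF correspondence for first-order logic, where only thresholded cardinalities need to match, and it is what motivates adding the set-selection phase to the pebble game.
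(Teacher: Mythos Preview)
The paper does not actually prove Theorem~\ref{theorem:EF_counting}: it is stated in the background section as a known result due to Immerman and Lander~\cite{immerman1990describing}, and no argument is given. So there is no ``paper's own proof'' to compare against; your outline is essentially the classical argument one finds in that reference or in standard textbooks.

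Your sketch is sound in its overall shape. Two small points are worth tightening. First, in the converse direction you assert that the multisets of rank-$(r-1)$ types in $\mathcal{A}$ and $\mathcal{B}$ coincide exactly. This is correct, but note that it relies on the counting rank being \emph{unbounded}: you need the sentences $\exists^{\geq j} x_i\,\chi^{r-1}_\tau$ for every $j$ up to $|\mathcal{A}|$, and these all have quantifier rank $r$ regardless of $j$. Second, the characteristic formulas $\chi^{r-1}_\tau$ here are not the usual first-order Hintikka formulas but their counting-logic analogues, which at each quantifier layer record the exact number of extensions of each lower-rank type; their construction within the $m$-variable fragment (reusing the pebbled variable at each layer) is routine but, as you flag, is where the bookkeeping lives. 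With those details filled in your argument goes through.
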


A similar result holds for the bounded counting rank fragment ${\mathcal C}_m^t$, by restricting the cardinality of the sets picked to be at most $t$.

\begin{figure}[h]
  \centering
  \includegraphics[width=\linewidth]{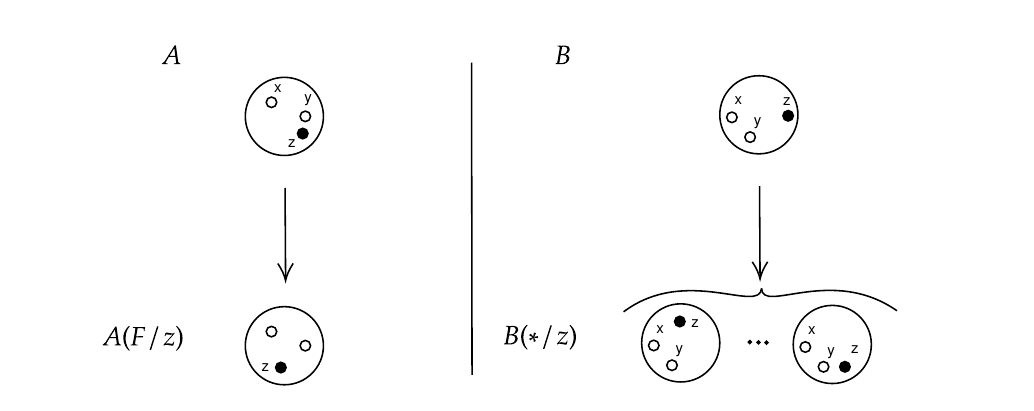}

\caption{Two families, $A = \{(\mathcal{A}, \alpha)\}$ and $B = \{(\mathcal{B}, \beta)\}$, for the $\exists z $ move in the HV-game.}
\label{figure:hella_vaananen_game_picture}
\end{figure}

\begin{definition}[HV game for formula size on first-order logic]
The game HV$_w(A, B)$ is played on two families, $A$ and $B$, by Spoiler and Duplicator. The initial position is $(w, A, B)$. There are 
five possibilities for the continuation of the game:
\begin{itemize}
    \item $\neg$-move: the game continues from the position $(w-1,B,A)$.
    \item $\bigvee$-move: Spoiler chooses $1 \leq u, v < w$ such that $u + v = w$ and partitions $A$ to get $ C \cup D$. Duplicator chooses the next position as $(u, C, B)$ or $(v, D, B)$ .
    \item $\bigwedge$-move: similar but played on $B$.
    \item $\exists$-move: Spoiler chooses $j \in \mathbb{N}$ and a choice function $F$ from $F_A$. Then the game continues from $(w-1, A(F/j), B(*/j))$.
    \item $\forall$-move: similar but Spoiler chooses on $B$.
        \end{itemize}
Spoiler wins if the game reaches a position $(w, A, B)$ for $w \geq 1$ and there is an atomic formula 
that distinguishes $A$ and $B$. Duplicator wins if the game reaches a position $(1, A, B)$ and Spoiler does not win.
\end{definition}

\begin{theorem}
\label{theorem:FO_size}
Let $(A, B)$ be a pair of families, and let $w$ be a positive integer. Then the following  are equivalent:
\begin{enumerate}
    \item Spoiler has a winning strategy in the game HV$_w(A, B)$.
    \item There is a formula $\phi$ of FO of size $|\phi| \le w$
    such that $(A, B) \models \phi$.
\end{enumerate}
\end{theorem}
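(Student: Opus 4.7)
The plan is to prove the two directions by a symmetric induction: for $(2) \Rightarrow (1)$, I would induct on the structure of a distinguishing formula $\phi$ with $|\phi| \le w$; for $(1) \Rightarrow (2)$, I would induct on $w$, reading off Spoiler's first move from a winning strategy. The design of the HV game makes each FO connective correspond transparently to exactly one type of move, so the inductive steps are essentially translations between the syntactic side (formulas) and the game-theoretic side (positions and strategies).

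For the direction $(2) \Rightarrow (1)$, the base case is $\phi$ atomic of size $1$: the current position $(w, A, B)$ with $w \ge 1$ is already a winning position for Spoiler by definition. For the inductive step, each connective dictates Spoiler's first move. For $\phi = \neg \psi$ Spoiler plays the $\neg$-move and applies the inductive hypothesis to $(B, A)$ with $\psi$ of size $w-1$. For $\phi = \psi_1 \vee \psi_2$ with $|\psi_1| + |\psi_2| = w$, Spoiler partitions $A = C \cup D$ by placing each $(\mathcal{A}, \alpha) \in A$ into $C$ if it satisfies $\psi_1$ and otherwise into $D$ (in which case it must satisfy $\psi_2$); whichever subgame Duplicator selects, induction applies. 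The $\bigwedge$-move is dual, partitioning $B$ according to $\neg\psi_1$ vs.\ $\neg\psi_2$. For $\phi = \exists x_j \psi$ Spoiler defines a choice function $F$ by picking, for each $(\mathcal{A}, \alpha) \in A$, a witness $a \in \mathcal{U}^\mathcal{A}$ with $(\mathcal{A}, \alpha(a/j)) \models \psi$; after the move, $A(F/j) \models \psi$ while $B(*/j) \models \neg \psi$, since every $(\mathcal{B}, \beta) \in B$ satisfies $\neg \exists x_j \psi$. The $\forall$-move is dual.

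For the direction $(1) \Rightarrow (2)$, I would induct on $w$ by inspecting Spoiler's first move in a fixed winning strategy. If $w = 1$ and Spoiler wins via an atomic distinguisher, take $\phi$ to be that atomic formula. Otherwise each move yields the corresponding connective: a $\neg$-move produces $\phi = \neg \psi$ from a winning substrategy on $(B, A)$ with budget $w-1$; a $\bigvee$-move with partition $A = C \cup D$ and split $u + v = w$ produces $\phi = \psi_1 \vee \psi_2$, where Spoiler's strategy must win \emph{both} continuations (because Duplicator chooses the branch), supplying $\psi_1$ of size $u$ distinguishing $(C, B)$ and $\psi_2$ of size $v$ distinguishing $(D, B)$; the $\bigwedge$-move is dual. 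For an $\exists$-move with $(j, F)$, induction provides $\psi$ of size $w-1$ distinguishing $(A(F/j), B(*/j))$, and I take $\phi = \exists x_j \psi$: on the $A$ side, $F(\mathcal{A}, \alpha)$ is a witness for $\exists x_j \psi$; on the $B$ side, since every extension by every $b \in \mathcal{U}^\mathcal{B}$ appears in $B(*/j)$ and fails $\psi$, we get $B \models \neg \exists x_j \psi$. The $\forall$-move is dual, giving $\phi = \forall x_j \psi$.

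I do not expect a serious obstacle; the argument is a structural induction mirroring the moves of the game, and the budget $w$ tracks formula size exactly because each move charges its connective's contribution to $|\phi|$. The only place worth stating carefully is the semantic match between the change/multiplication operations and the quantifier rules, namely that $A(F/j) \models \psi$ certifies $A \models \exists x_j \psi$ via the witness $F(\mathcal{A},\alpha)$, while $B(*/j) \models \psi$ is logically the same as $B \models \forall x_j \psi$. Both facts are immediate from the definitions of the two operations, but they are the pivot on which the correctness of the $\exists$/$\forall$ translations turns, so I would verify them explicitly before closing the induction.
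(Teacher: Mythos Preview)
Your proposal is correct and is exactly the standard structural induction. The paper does not give a separate proof of this theorem (it is quoted as background from Hella--V\"a\"an\"anen), but the paper's proof of its counting-logic generalisation (Theorem~\ref{theorem:counting_size}) in Section~\ref{appendix:proof_game} follows precisely the scheme you outline, with the $\exists$/$\forall$ cases being the $k=1$ instances of the $\exists^{\ge k}$/$\forall^{\ge k}$ cases treated there.
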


Note that for any $m \in \mathbb{N}$, one can define the variant of the game HV$_w^m(A, B)$, for which the moves $\exists$ and $\forall$ are restricted by $j \in \{1, \dotsc, m\}$. This game then characterizes $FO_m$, the fragment of FO logic,, with $m$ variables. An illustration of the $\exists$-move is given in Fig. \ref{figure:hella_vaananen_game_picture}.

In Table \ref{tab:logic_notation}, we summarize the game characterizations discussed above and the game we are about to introduce.

\begin{table}[htbp]
    \caption{Complexity characterizing games.}
\begin{center}
    \renewcommand{\arraystretch}{1.5} 
    \begin{tabular}{|c|c|c|}
        \hline
         & FO$_m$ & \textbf{$\mathcal{C}_m$} \\
        \hline
        \textbf{Quantifier Rank} & \cite{Fraisse54}, \cite{Ehrenfeucht1961} & EF$_m$ \cite{immerman1990describing}\\
        \hline
        \textbf{Size} & HV$_m$ \cite{HELLA1996} & CS$_m$ (new) \\
        \hline
    \end{tabular}
    \end{center} 
    \label{tab:logic_notation}
\end{table}

\section{The counting logic formula size game} \label{sec:prel}

In this section we define our game for counting logic formula size. We start by extending the operations of the HV game to this setting.

\subsection{Extended operations}
\label{sec:extended_operations}

\begin{definition} \label{def:kc} ($k$-choice function, selection)
Given a family $A$, a \emph{$k$-choice function} is of the form
$F^k=(F^k_1,F^k_2,\dotsc,F^k_k)$,
where each $F^k_i$ is a choice function, and for every $(\mathcal{A}, \alpha) \in A$ the elements $F^k_i(\mathcal{A}, \alpha)$ are pairwise distinct.
The set of all the $k$-choice functions on the family $A$ is denoted by $F_A^k$. A choice function $F$ is a \emph{selection} from $F^k$ if $F(\mathcal{A},\alpha) \in \{ F^k_1(\mathcal{A},\alpha),\dotsc,F^k_k(\mathcal{A},\alpha)\}$ for every 
$(\mathcal{A},\alpha) \in A$. We say that $F$ is \emph{selected} from $F^k$.
\end{definition}

Note that if $F$ is a \emph{selection} from $F^k$, then the choices on different interpretations may correspond to different $F^k_i$s. This concept will be used to represent the Spoiler's choices of one interpretation from a set of interpretations. The extended set of operations is the following:

\begin{itemize}
\item \emph{$k$-Change}: Given a family $A$ and $j, k \in \mathbb{N}$, the $k$-change operation associated to $F^k \in F^k_A$ produces the family
\[A(F^k/j):= \{(\mathcal{A}, \alpha(F^k_i(\mathcal{A},\alpha)/j)) : (\mathcal{A}, \alpha) \in A, 1 \le i \le k\}.\]
Thus each interpretation gives rise to $k$ interpretations composing the new family, where the assignments to $x_j$ are changed based on the $k$-choice function $F^k$.

\item \emph{$k$-Multiplication}: Given a family $A$ and $j \in \mathbb{N}$, the $k$-Multiplication operation on $A$ does the following: for every $k$-choice function $F^k$ on $A$, a choice function $F$ is \emph{selected} from $F^k$. The union of the $A(F/j)$ over $F^k \in F^k_A$ forms $A(*^k/j)$. 

In other words, for every $F^k \in F^k_A$ and $ (\mathcal{A}, \alpha) \in A$, one interpretation is picked from $\{(\mathcal{A}, \alpha(F^k_i(\mathcal{A},\alpha)/j)): 1 \le i \le k\}$ to be part of $A(*^k/j)$. Thus each interpretation generates $card(F^k_A)$ interpretations that compose the family $A(*^k/j)$.
\end{itemize}

\subsection{Formula size game for counting logic}
\label{sec:complete}

We are now ready to define the game for counting logic formula size (referred to as the CS game for ``counting size''). It is presented in the ${\mathcal C}_m$ version as this will be used in the rest of the paper. The new $\exists^{\geq k}$-move is illustrated in Fig. \ref{figure:size_counting_game}.

\begin{definition}[CS game for formula size on counting logic]
\label{complete_game}
The game CS$_w^m(A, B)$ has two players, Spoiler and Duplicator, $m$ is the number of variables. $A,B$ are two families with $dom(A) = dom(B)$ of size at most $m$. 
Suppose after $p$ moves we reach the position $(w, A, B)$. 
Depending on Spoiler's choice, the game continues as follows:

\begin{itemize}

    \item $\neg$-move: the game continues from the position $(w-1,B,A)$.
    
    \item $\bigvee$-move: Spoiler first chooses numbers $u$ and $v$ such that 1 $\leq$  $u, v < w$ and $u + v = w$. Then Spoiler partitions $A$ into a pair of families $C$ and $D$. The game continues either from the position $(u$, $C$, $B)$ or from the position $(v$, $D$, $B)$ according to Duplicator's choice.
    
    \item $\bigwedge$-move: Spoiler first chooses numbers $u$ and $v$ such that  $1 \leq$ $u, v < w$ and $u + v = w$. Then Spoiler turns $B$ into a pair of families $C$ and $D$. The game continues either from the position $(u$, $A$, $C)$ or from the position $(v$, $A$, $D)$ according to Duplicator's choice.

    \item $\exists^{\geq k}$-move: Spoiler chooses $j \in [m],k \in \mathbb{N}$ and a $k$-choice function $F^k$ on $A$. For every $k$-choice function $G^k$ on $B$, Spoiler selects $G$ from $G^k$. The union of the $B(G/j)$ over $G^k \in F^k_B$ forms $B(*^k/j)$. Then the game continues from the position $(w-1$, $A(F^k/j)$, $B(*^k/j))$.

    \item $\forall^{\geq k}$-move: Spoiler chooses $j \in [m],k \in \mathbb{N}$ and a $k$-choice function $F^k$ on $B$. For every $k$-choice function $G^k$ on $A$, Spoiler selects $G$ from $G^k$. The union of the $A(G/j)$ over $A^k \in F^k_A$ forms $A(*^k/j)$. Then the game continues from the position $(w-1$, $A(*^k/j)$, $B(F^k/j))$.

\end{itemize}

(Atomic) The game ends in a position $(w,A,B)$ if either there is an atomic formula $\phi$ such that $(A,B) \models \phi$, in which case Spoiler wins, or if $w = 1$, in which case Duplicator wins if there is no such $\phi$.
\end{definition}

\begin{figure}[h]
  \centering
  \includegraphics[width=\linewidth]{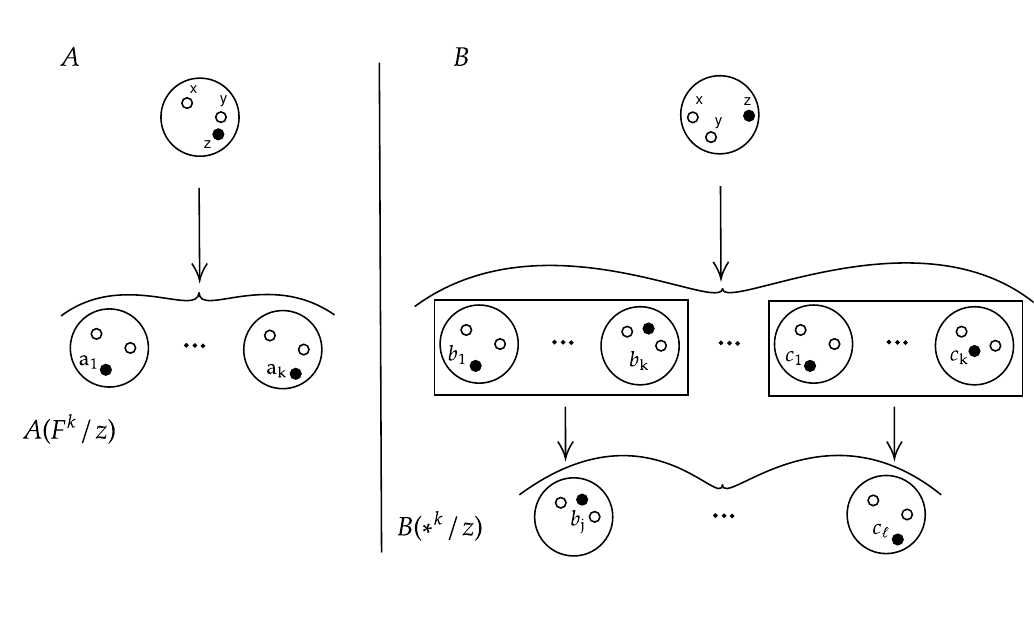}
\caption{ The $\exists^{\geq k}$-move in the CS game. On the left the $k$-Change operation: $k$ different elements are chosen by Spoiler. On the right the two steps of the $k$-Multiplication. The first step is to form all $k$-choice functions (each symbolized by a box). In the second step Spoiler picks one interpretation in each box to compose the new family.}
\label{figure:size_counting_game}

\end{figure}

\subsection{An illustration of the CS game}
\label{appendix:example}

\begin{figure}[h]
  \centering
  \includegraphics[width=\linewidth]{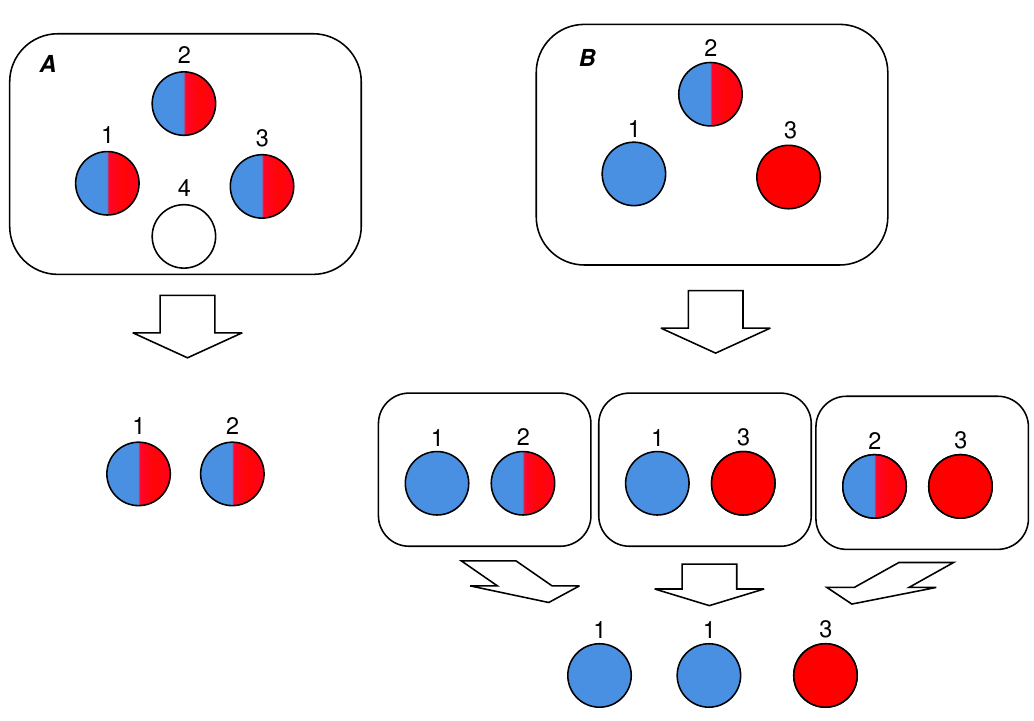}
\caption{Structures $\mathcal{A}$ and $\mathcal{B}$ with relations $red$ and $blue$, and a possible $\exists^{\geq 2}x$ move.}
\label{figure_3}
\end{figure}

\begin{example}
\label{example_1}
Consider two structures $\mathcal{A}$ and $\mathcal{B}$ of 3 and 4 elements respectively, onto which we define the unary relations $blue$ and $red$, as illustrated in Fig. \ref{figure_3}.\\
The sentence $\exists^{\ge 2}x \; ( blue(x) \wedge red(x))$ distinguishes $\mathcal{A}$ and $\mathcal{B}$ and is of size $w=3$. It is easy to check by enumerating a finite number of formulae that no formula of size $w=2$ in counting logic with counting rank $2$ is able to distinguish $\mathcal{A}$ from $\mathcal{B}$.\\
Consider the starting position  CS$^{1,2}_3(A_0,B_0)$. In the following description we omit $w$ for simplicity.
Assume again that Spoiler plays $\exists^{\ge 2} x$ and picks  $\{1,2\}$ 
in the left, corresponding to two copies of $\mathcal{A}$, with $\alpha(x) = 1$ and $\alpha(x) = 2$. The $2$-choice functions on $B$ are $\{1,2\}, \{1,3\},\{2,3\}$. \\
Spoiler now has several options to select a choice function: $\{\{1,1,3\}\} = \{1,3\}$ (as on the Figure), $\{1,2\}$, $\{2,3\}$ or $\{1,2,3\}$. In all these cases, neither $red$ nor $blue$ distinguishes the two families. So the game has to go on for Spoiler to be able to win. In fact, it is not too complicated to check that Spoiler does not win from the position CS$^{1,2}_2(A_0,B_0)$.
Continuing from ($\{1,2\}$, $\{1,3\}$) as in the Figure, Spoiler can play the $\wedge$ move to get to one of the two positions:
\begin{itemize}
    \item $\{1,2\}$ on $A$ and  $\{1\}$ on $B$, which wins by playing $red$.
    \item $\{1,2\}$ on $A$ and  $\{3\}$ on $B$, which wins by playing $blue$.
\end{itemize}
So, the "$\exists^{\geq 2} x (red(x) \wedge blue(x))$" sequence of moves is winning for Spoiler and the corresponding formula separates $\mathcal{A}$ from $\mathcal{B}$.
\end{example}
\section{Game characterization theorem} \label{sec:gchar}

In this section, we state the game characterization theorem and present its corollaries before proceeding to the proof of the theorem. 

\subsection{Game characterization and corollaries}

\begin{theorem}[Characterization Theorem]
\label{theorem:counting_size}
Let $A, B$ be families and $w$ be a positive integer. Then the following are equivalent:
\begin{enumerate}
    \item Spoiler has a winning strategy in the game CS$_w^{\,m}(A, B)$.
    \item There is a ${\mathcal C}_m$ formula $\phi$ of size $|\phi| \leq w$ such that $(A, B) \models \phi$.
\end{enumerate}
\end{theorem}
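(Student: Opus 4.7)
The plan is to prove the equivalence by induction on $w$, handling both implications inside a single induction. The base case $w=1$ is the atomic-win clause of the game: the position is terminal, and Spoiler wins exactly when some atomic formula distinguishes $(A,B)$; since atomic formulas have size $1$, both directions reduce to the definition. For the inductive step I would read off either the outermost constructor of $\phi$ (for direction $(2)\Rightarrow(1)$) or Spoiler's first winning move (for direction $(1)\Rightarrow(2)$), translating between them case by case.

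For the direction $(2)\Rightarrow(1)$ the translation is natural. A negation produces a $\neg$-move; a disjunction $\psi_1\vee\psi_2$ produces a $\vee$-move in which Spoiler partitions $A$ into interpretations satisfying $\psi_1$ and those satisfying $\psi_2$ (breaking ties arbitrarily) and sets $u=|\psi_1|$, $v=w-u\ge|\psi_2|$; conjunction is dual; and $\exists^{\ge k}x_j\,\psi$ produces the $\exists^{\ge k}$-move with a $k$-choice function $F^k$ whose $i$-th coordinate picks the $i$-th of $k$ distinct witnesses for $\psi$ at each $(\mathcal{A},\alpha)\in A$. The essential observation for the $B$-side is that, since every $(\mathcal{B},\beta)\in B$ has fewer than $k$ witnesses for $\psi$, among the $k$ distinct values $G^k_i(\mathcal{B},\beta)$ in any $k$-choice function $G^k$ on $B$ at least one must fail $\psi$, and Spoiler selects such an index pointwise to form $G$. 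After the move, $A(F^k/j)\models\psi$ and $B(*^k/j)\models\neg\psi$, so the inductive hypothesis closes the case. The $\forall^{\ge k}$ case is symmetric.

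For the direction $(1)\Rightarrow(2)$ I invert the translation: from Spoiler's first winning move I construct $\phi$ by applying the inductive hypothesis to each resulting position. The $\neg$, $\vee$, $\wedge$ cases are immediate, using additivity of formula size over Boolean connectives (and taking the Boolean constant witnessed when a branch of the partition is empty). For an $\exists^{\ge k}$-move using $F^k$ at variable $j$, the inductive hypothesis yields $\psi$ of size $\le w-1$ with $(A(F^k/j),B(*^k/j))\models\psi$, and I set $\phi=\exists^{\ge k}x_j\,\psi$. That every $(\mathcal{A},\alpha)\in A$ satisfies $\phi$ follows from the pointwise distinctness of the $F^k_i(\mathcal{A},\alpha)$ and the fact that each $(\mathcal{A},\alpha(F^k_i(\mathcal{A},\alpha)/j))\in A(F^k/j)$ satisfies $\psi$. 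For $(\mathcal{B},\beta)\in B$ I argue by contradiction: if $k$ distinct elements $b_1,\dots,b_k$ satisfied $\psi$, extending $\{b_1,\dots,b_k\}$ to a $k$-choice function $G^k$ on $B$ and following Spoiler's prescribed selection $G$ would place one of the $(\mathcal{B},\beta(b_i/j))$ into $B(*^k/j)$, contradicting $B(*^k/j)\models\neg\psi$.

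The main obstacle I anticipate is precisely the $\exists^{\ge k}$ (and dual $\forall^{\ge k}$) case, specifically aligning the asymmetric operations $k$-Change (which directly encodes ``choose $k$ witnesses'') and $k$-Multiplication with Spoiler's per-$G^k$ selection (which encodes ``for every candidate set of $k$ elements, one fails''---precisely the negation of $\exists^{\ge k}$) with the semantics of counting quantifiers on the two sides. A minor technical point is that extending a local specification of witnesses $\{b_1,\dots,b_k\}$ at a single $(\mathcal{B},\beta)$ to a genuine $k$-choice function on all of $B$ requires that every universe contain at least $k$ elements, which I would handle either by restricting the game to such families or by observing that interpretations with small universes can be separated cheaply by an atomic check. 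Once this correspondence is set up carefully, the Boolean and negation cases are routine adaptations of the Hella--V\"a\"an\"anen argument.
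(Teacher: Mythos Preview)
Your proposal is correct and follows essentially the same induction and case analysis as the paper's proof, including the key observations for the $\exists^{\ge k}$ and $\forall^{\ge k}$ moves. Your one worry---extending $\{b_1,\dots,b_k\}$ at a single $(\mathcal{B},\beta)$ to a $k$-choice function on all of $B$---is a non-issue here because a family by definition has a single fixed underlying structure $\mathcal{B}$, so once $k$ witnesses exist at one interpretation the universe has $\ge k$ elements and an arbitrary extension to the remaining interpretations is available.
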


A corollary for structures is as follows: 

\begin{corollary}
Let $(\mathcal{A}$ and $\mathcal{B})$  be structures and let $w$ be a positive integer. Then the following conditions are equivalent:
\begin{enumerate}
    \item Spoiler has a winning strategy in the game CS$_w^{\,m}(A_0, B_0)$.
    \item There is a ${\mathcal C}_m$-sentence $\phi$ of size $|\phi| \leq w$  such that $(\mathcal{A}, \mathcal{B}) \models \phi$.
\end{enumerate}
\end{corollary}

\begin{corollary}
Let $(\mathcal{A}$ and $\mathcal{B})$  be structures and let $w$ be a positive integer. Then the following conditions are equivalent:
\begin{enumerate}
    \item Spoiler has a winning strategy in the game CS$_w^{\,m}(A_0, B_0)$.
    \item There is a ${\mathcal C}_m$-sentence $\phi$ of size $|\phi| \leq w$  such that $(\mathcal{A}, \mathcal{B}) \models \phi$.
\end{enumerate}
\end{corollary}

We now describe the variation of the game for bounded counting rank.
\begin{definition}[CS game for ${\mathcal C}^t_m$-formula size]
\label{complete_game_bounded}
The CS$_w^{m,t}(A, B)$ game is the version of the CS$_w^{m}(A, B)$ game where $k \le t$ for the $\exists^{\geq k}$ and $\forall^{\geq k}$ moves.
\end{definition}

\begin{corollary}
\label{theorem:counting_size_bounded_counting_rank}
Suppose $A$ and $B$ be families and let $w$ be a positive integer. Then the following are equivalent:
\begin{enumerate}
    \item Spoiler has a winning strategy in the game CS$_w^{\,m,t}(A, B)$.
    \item There is a ${\mathcal C}_m^t$-formula $\phi$ of size $|\phi| \leq w$ such that $(A, B) \models \phi$.
\end{enumerate}
\end{corollary}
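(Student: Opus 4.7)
The plan is to derive this corollary directly from the Characterization Theorem (Theorem~\ref{theorem:counting_size}) by observing that the only difference between the $\mathrm{CS}_w^{m}$ game and the $\mathrm{CS}_w^{m,t}$ game is the numerical cap $k \le t$ on counting moves, which corresponds exactly to the counting rank restriction defining $\mathcal{C}_m^t$. So rather than redoing the induction from scratch, I would re-run the proof of Theorem~\ref{theorem:counting_size} and check that the cap is preserved in both directions.

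For the direction ``formula $\Rightarrow$ winning strategy'', I would proceed by structural induction on a $\mathcal{C}_m^t$-formula $\phi$ of size at most $w$ distinguishing $(A,B)$, using exactly the Spoiler strategy built in the proof of Theorem~\ref{theorem:counting_size}. The only moves whose parameters depend on the formula are the $\exists^{\ge k}$ and $\forall^{\ge k}$ moves, and in those cases Spoiler's $k$ is the counting rank of the outermost counting quantifier of $\phi$. Since $\phi \in \mathcal{C}_m^t$, every such quantifier has $k \le t$, so Spoiler's strategy respects the $k \le t$ restriction of $\mathrm{CS}_w^{m,t}$. All other moves ($\neg$, $\bigvee$, $\bigwedge$) are unaffected, and all subformulae remain in $\mathcal{C}_m^t$, so the induction goes through.

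For the direction ``winning strategy $\Rightarrow$ formula'', I would again mirror the construction in the proof of Theorem~\ref{theorem:counting_size}, reading a distinguishing formula off Spoiler's winning strategy tree in the $\mathrm{CS}_w^{m,t}$ game. A $\neg$-move contributes a negation, a $\bigvee$-move a disjunction of the two subtree formulae, a $\bigwedge$-move a conjunction, and an $\exists^{\ge k}$- (resp.\ $\forall^{\ge k}$-) move contributes the quantifier $\exists^{\ge k} x_j$ (resp.\ $\forall^{\ge k} x_j$) applied to the subtree formula. Because the $\mathrm{CS}_w^{m,t}$ game explicitly restricts $k \le t$, every counting quantifier appearing in the extracted formula has counting rank at most $t$, hence the formula lies in $\mathcal{C}_m^t$. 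The size and variable bounds transfer verbatim from the unrestricted case.

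The main (and essentially only) step requiring care is making sure the two restrictions line up precisely: that the quantifier $\exists^{\ge k}$ one writes down in the formula-construction direction has \emph{the same} $k$ as the one played by Spoiler, and conversely that the $k$ chosen by Spoiler when simulating the formula equals the counting rank of the simulated quantifier. Granted this bookkeeping, no new induction is needed; the corollary is a syntactic restriction of Theorem~\ref{theorem:counting_size}. For this reason I would write the proof as a short remark stating ``the proof of Theorem~\ref{theorem:counting_size} restricts to $\mathcal{C}_m^t$ and $\mathrm{CS}_w^{m,t}$ by tracking the parameter $k$ of each counting move,'' rather than reproducing the entire induction.
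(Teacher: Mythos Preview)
Your proposal is correct and matches the paper's own treatment: the paper explicitly states that the proof is the same as that of Theorem~\ref{theorem:counting_size} with the parameter $k$ bounded, and omits the details. Your observation that one simply tracks the constraint $k \le t$ through both directions of the induction is exactly the intended argument.
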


The proof is similar to the CS game characterisation of $\mathcal{C}_m$ formula size presented in Section \ref{appendix:proof_game}, but with the aforementioned parameter $k$ bounded, and is omitted.

We also derive a variation of the game for graded modal logic, where the modal binary relation is denoted by $E$.

\begin{definition}[CS game for graded modal logic formula size]
The $E$-CS$_w^{\,2}(A, B)$ game is the two variables version of the CS$_w^{\,2}(A, B)$ game where additionally for the $\exists^{\geq k}$ and $\forall^{\geq k}$ moves: Spoiler chooses $i \in [m]$ that has already been chosen (i.e a variable $x_j$ with an assignment), and the spaces of the $k$-choice functions are limited to $\{e \in A | E(\alpha(x_i),e)\}$ and $\{e \in B | E(\beta(x_i),e)\}$.
\end{definition}

Following the equivalence between the expressivity of \gnns and the graded modal logic established in \cite{Barcelo20}, we obtain the following corollary linking the distinguishing power of \gnns and EF games:

\begin{corollary}
Suppose $A$ and $B$ be families and let $w$ be a positive integer. Then the following are equivalent:
\begin{enumerate}
    \item Spoiler has a winning strategy in the game $E$-CS$_w^{\,2}(A, B)$.
    \item There is a \gnn that expresses an FO formula of size $w$ that is capable of distinguishing $A$ from $B$.
\end{enumerate}
\end{corollary}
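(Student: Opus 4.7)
The plan is to obtain this corollary by chaining two equivalences, in exactly the same style in which Corollary 4.3 was deduced from Theorem 4.1: first match the $E$-CS game with \emph{graded modal logic formula size}, and then match graded modal logic with \gnns{} via the Barcel\'o et al.~\cite{Barcelo20} characterization.

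For the first step I would replay the proof of Theorem 4.1 (the Characterization Theorem) in the $E$-restricted setting. The only change in the game rules is that on an $\exists^{\geq k}$- or $\forall^{\geq k}$-move Spoiler first names a variable $x_i \in dom(A) = dom(B)$ and the allowed $k$-choice functions on $A$ and on $B$ are restricted to $E$-neighbors of $\alpha(x_i)$ and $\beta(x_i)$ respectively. This restriction is precisely the syntactic feature that distinguishes graded modal logic as a fragment of $\mathcal{C}_2$: new variables can only be introduced under an $E$-guarded quantifier of the form $\exists^{\geq k} x_j\, (E(x_i,x_j) \wedge \varphi)$, and dually for $\forall^{\geq k}$. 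With this correspondence fixed, the inductive argument used for Theorem 4.1 transfers line by line. The forward direction reads a winning strategy of Spoiler as a distinguishing graded modal formula, where the guard $E(x_i,x_j)$ records the move of fixing $x_i$ first and restricting to its neighborhood; the backward direction converts a distinguishing graded modal formula into a strategy by a straightforward structural induction. The size bound is preserved move-for-move, so Spoiler wins $E$-CS$_w^{\,2}(A,B)$ iff some graded modal logic formula of size at most $w$ distinguishes $A$ from $B$.

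For the second step I would invoke directly the logical characterization from Barcel\'o et al.~\cite{Barcelo20}, which states that the properties expressible by a \gnn{} are exactly the graded modal logic properties, and use this to translate between a graded modal distinguishing formula and a \gnn{} distinguishing $A$ from $B$. Composing the two equivalences yields the corollary. I expect the main point requiring care to be the first step: one must verify that the neighborhood restriction on $k$-choice functions in the game is the exact game-theoretic counterpart of the $E$-guard in the formula, i.e., that no extra flexibility is gained or lost in either direction. Once this is checked on the $\exists^{\geq k}$- and $\forall^{\geq k}$-cases of the induction, the rest of the proof is a mechanical adaptation of the proof of Theorem 4.1 (which is reproduced in Section~\ref{appendix:proof_game}).
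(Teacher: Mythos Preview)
Your proposal is correct and matches the paper's approach: the paper does not give an explicit proof of this corollary, but simply states that it follows from the Barcel\'o et al.\ equivalence between \gnn-expressible FO classifiers and graded modal logic, combined (implicitly) with the game characterization adapted to the $E$-restricted setting. Your two-step decomposition makes this explicit and is exactly what the paper intends.
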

\noindent
We now prove Theorem \ref{theorem:counting_size}
\subsection{Proof of the counting game for size characterisation}

\label{appendix:proof_game}

We use induction on $w$. If $w = 1$, Spoiler wins in CS$^m_1(A, B)$ only if there is an atomic formula $\phi$, so verifying $|\phi|=1$, such that $(A, B) \models \phi$. Reciprocally, if $|\phi|= 1$ and $(A, B) \models \phi$, then $\phi$ is an atomic formula and spoiler wins in CS$^m_1(A, B)$. \\
We now assume the equivalence for $v < w$.\\
To prove the forward direction of the equivalence, suppose Spoiler has a winning strategy in the game CS$^m_w(A, B)$ starting with:

\begin{itemize}
    \item $\neg$-move. Spoiler has a winning strategy in CS$^m_{w-1}(B, A)$, so by induction hypothesis there is a formula $\psi$ verifying $|\psi| \leq w - 1 $ and  $(B, A) \models \psi$. Finally $(A, B) \models \neg \psi$ and $|\neg \psi| = 1 + |\psi| \leq w$ so we get $(2)$.

    \item $\bigvee$-move, choosing $u, v, C$ and $D$ such that  $1 \leq$  $u, v < w$ with $u + v = w$, splitting $A$ into $C$ and $D$. Spoiler has a winning strategy in CS$_u^m(C, B)$ and CS$_u^m(D, B)$, so by the induction hypothesis, there are formulae $\psi$ and $\theta$ such that $|\psi| \leq u, $ $|\theta| \leq v$, $(C, B) \models \psi$ and $(D, B) \models \theta$.\\
    We have that $C \models \psi$ and $D \models \theta$, so $A \models \psi \vee \theta$. Conversely, $B \models \neg \psi$ and $B \models \neg \theta$, and consequently $B \models \neg (\psi \vee \theta)$.\\
    It follows that $(A, B) \models \psi \vee \theta$, and since $|\psi \vee \theta| = |\psi| + |\theta| \leq u + v = w$, we get $(2)$.

    \item  $\bigwedge$-move; and by a symmetrical argument to the $\bigvee$-move we get $(2)$.

    \item $\exists^{\geq k}$-move; choosing $j \in [m], k \in \mathbb{N}$, a $k$-choice function $F^k$ on $A$ and for every $G^k \in F^k_B$, selecting $G$ from $G^k$ to form $B(*^k/u) = \bigcup_{G^k \in F^k_B} B(G/u) $. Spoiler has a winning strategy in CS$^m_{w-1}(A(F^k/j)$, $B(*^k/j))$, so by the induction hypothesis, there is a formula $\psi$ such that $|\psi| \leq w - 1$ and  $(A(F^k/j)$, $B(*^k/j)) \models \psi$.\\
    Define $\phi:= \exists^{\geq k} x_j \psi$. We have that $A(F^k/j) \models \psi$, and so for all $1 \leq i \leq k$, $(\mathcal{A}, \alpha(F^k_{i}/j)) \models \psi$ and therefore $A \models \phi $.\\
    Now suppose that there exists a $k$-choice function $G^k$ on $B$ and a $(\mathcal{B}, \beta) \in B$  such that $\{(\mathcal{B}, \beta(G^k_i(\mathcal{B},\beta)/j)) : 1 \le i \le k\}\models \psi$. Then there is an $i \in [k]$ such that $(\mathcal{B}, \beta(G^k_i(\mathcal{B},\beta)/j))$ is part of the family $B(*^k/j)$ through the selected $G$, but $B(*^k/j) \models \neg \psi$, which is a contradiction.
    Therefore $B \models \neg \phi$.\\
    Finally $(A,B) \models \phi$ and $|\phi| = |\psi| + 1 \leq w$, so we get $(2)$.

    \item $\forall^{\geq k}$-move; a symmetrical argument to the $\bigvee$-move yields $(2)$.

    \end{itemize}
    
To prove the backward direction, assume there is a formula $\phi$ of size $w > 1$ such that $(A, B) \models \phi$. We show that Spoiler has a winning strategy in CS$^m_w(A, B)$. Let us consider the form of the formula $\phi$:

\begin{itemize}
    
    \item $\phi = \neg \psi$: Spoiler plays the $\neg$-move and gets in the position CS$^m_{w-1}(B, A)$. Since $|\psi|< w$ and $(B, A) \models \phi$, Spoiler has a winning strategy in CS$^m_{w-1}(B, A)$ by induction hypothesis, and therefore Spoiler has also a winning strategy in CS$^{m}_{w}(A,B)$.

     \item $\phi = \psi \vee \theta$: Define $C = \{(\mathcal{A}, \alpha) \in A | (\mathcal{A}, \alpha)\models \psi \}$, $D = \{(\mathcal{A}, \alpha) \in A | (\mathcal{A}, \alpha) \models \theta \}$, $u$ and $v$ such that $w = u + v$, $|\psi| \leq u $ and $|\theta| \leq v$. Spoiler plays the $\bigvee$-move associated to $u,v,C,D$ and gets to CS$^m_{u}(C, B)$ or CS$^m_{v}(D, B)$ according to Duplicator's choice. We have that $(C,B) \models \psi$, $(D,B) \models \theta$ and therefore by induction hypothesis Spoiler has a winning strategy in CS$^m_{u}(C, B)$ or CS$^m_{v}(D, B)$, and therefore Spoiler has also a winning strategy in CS$^{m}_{w}(A,B)$.

    \item $\phi = \psi \wedge \theta$: a symmetrical argument shows that Spoiler has a winning strategy in CS$^{m}_{w}(A,B)$.

    \item $\phi = \exists^{\geq k} x_j \psi$: Since $A \models \phi $, there is a $k$-choice function $F^k$ on $A$ such that $(\mathcal{A}, \alpha(F^k_{i}((\mathcal{A}, \alpha)/j)) \models \psi$, for every $1 \leq i \leq k$ and for all $(\mathcal{A}, \alpha) \in A$. Thus, $A(F^k/j) \models \psi $.\\
    On the other hand, $B \models \neg \phi $, thus from every $G^k$ on $B$ we can select $G$ such that $B(G/j) \models \neg \psi$ and therefore, $B(*^k/j) \models \neg \psi$.\\
    Finally we have that $(A(F^k/j), B(*^k/j)) \models \psi$ and $|\psi| = |\phi| - 1 < w$, so by induction hypothesis, Spoiler has a winning strategy in the game\\
    CS$^m_{w-1}(A(F^k/j), B(*^k/j))$ and therefore Spoiler has also a winning strategy in CS$^{m}_{w}(A,B)$.

    \item $\phi = \forall^{\geq k} x_j \psi$: a symmetrical argument shows that Spoiler has a winning strategy in CS$^{m}_{w}(A,B)$.

\end{itemize}
\qed

\section{Linear orders distinguinshability through Counting logics}
\label{sec:linear_order_lower_bound}

In this section, we present the new results derived from the complexity games for distinguishability and indistinguishability of finite linear orders.

A linear order is defined over the signature $\{min,max,<,succ\}$, where $<$ is a linear ordering and $succ$ is the successor relation.
$\mathcal{A}_n$ denotes the structure ($\{0, \dotsc, n\}$ , $<$), where $<$ is the standard linear order of $\{0, \dotsc, n\}$. In the rest of the paper, without loss of generality we will take linear orders on the integers, and we will denote by d$(b,a)$ the quantity $|b-a|$ for $a,b$ integers.

\subsection{Lower bound for the quantifier rank necessary to distinguish linear orders}

\begin{theorem} \label{quant_depth}
Let $t,k > 0$, and let $L_1,L_2$ be linear orders of length at least $(t+1)^k$.\\
Then $L_1 \equiv_{\mathcal{C}^{t}[k]} L_2$;\\
where $\mathcal{C}^{t}[k]$ denotes the fragment of the counting logic with counting rank at most $t$ and quantifier rank at most $k$.
\end{theorem}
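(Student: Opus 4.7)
The plan is to use the EF-style pebble game for counting logic with counting rank at most $t$ (the extension of Theorem \ref{theorem:EF_counting} obtained by restricting the cardinality of the chosen sets to at most $t$), in its $k$-round, unbounded-variable form. In each round, Spoiler chooses a structure and picks a set of at most $t$ elements; Duplicator replies with a set of the same cardinality in the other structure; Spoiler then picks an element in Duplicator's set, and Duplicator picks one in Spoiler's set. After $k$ rounds, the chosen elements together with the constants $\min,\max$ must induce a partial isomorphism. To prove $L_1 \equiv_{\mathcal{C}^{t}[k]} L_2$ it suffices to exhibit a Duplicator winning strategy.

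I would define the strategy by maintaining the invariant: after round $i$, for every pair of consecutive pebbled elements $p<q$ in $L_1$ with corresponding $p'<q'$ in $L_2$, either $d(p,q)=d(p',q')$, or both of them are at least $(t+1)^{k-i}$. The base case holds since initially only $\min$ and $\max$ are pebbled, and by hypothesis $d(\min,\max)\geq (t+1)^k$ in both structures.

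The inductive step is the heart of the argument. Suppose in round $i+1$ Spoiler picks $S\subseteq L_1$ with $|S|\leq t$. Inside each interval of $L_1$ where the invariant already guarantees equal length (the \emph{short} case), the corresponding $L_2$-interval has the same length and Duplicator matches the points of $S$ lying inside it exactly. Inside a \emph{long} interval $(p,q)$ of $L_1$ containing $j$ points of $S$, Duplicator selects $j$ points in the corresponding long interval $(p',q')$ of $L_2$ so that each of the $j+1$ resulting sub-intervals is either matched exactly (when its $L_1$-length is $<(t+1)^{k-i-1}$) or has length $\geq (t+1)^{k-i-1}$ in $L_2$ (when its $L_1$-length is $\geq (t+1)^{k-i-1}$). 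The key combinatorial claim is that this is always possible: if $s$ of the $j+1$ sub-intervals are short, their total $L_1$-length is strictly less than $s(t+1)^{k-i-1}$, so matching them exactly in $L_2$ leaves more than $(t+1)^{k-i}-s(t+1)^{k-i-1}=(t+1-s)(t+1)^{k-i-1}$ of room for the remaining $j+1-s\leq t+1-s$ long sub-intervals, each of which needs length at least $(t+1)^{k-i-1}$. Spoiler's subsequent choice of an element in Duplicator's reply is then answered by the corresponding element of $S$, and the invariant is preserved.

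After $k$ rounds the invariant gives consecutive-pebble distances that are either equal or both at least $1$, which is enough for the preservation of $<$, $=$, $\min$, and $\max$. The main obstacle, and the only subtlety beyond the pigeonhole counting, is the successor relation, which requires distances at the end to be either equal or both at least $2$; this is handled by tightening the invariant to ``both $\geq (t+1)^{k-i}+1$'' (or equivalently by reading the hypothesis via the $n+1$-element convention of $\mathcal{A}_n$), so that at round $k$ no short–long mismatch can produce a spurious successor pair. Combining with the game characterisation gives the claim.
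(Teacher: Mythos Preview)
Your strategy is essentially the paper's. The paper states the invariant for \emph{all} pairs of pebbles (not just consecutive ones) and already in the strict form---``if $d(a_j,a_l)\le (t+1)^{k-i}$ then the distances agree; if $d(a_j,a_l)>(t+1)^{k-i}$ then $d(b_j,b_l)>(t+1)^{k-i}$''---which is precisely your ``tightened'' version. The only organisational difference is in the inductive step: the paper handles each element of Spoiler's set individually, with three sub-cases (within $(t+1)^{k-i-1}$ of the left pebble, within $(t+1)^{k-i-1}$ of the right pebble, or in the middle), whereas you treat all $j$ points falling in one long interval simultaneously via your pigeonhole count on the $j{+}1$ sub-intervals. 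Your packaging is slightly slicker but actually proves more than needed (you control the gaps between consecutive $b_m$'s, when only the distances to the two adjacent \emph{old} pebbles matter once a single pair is finally placed); the paper's per-element argument makes this minimality explicit. On the $succ$ point you raise: the paper, like you, does not linger on it---at the end it only invokes the order-preservation property---so your remark is apt, but note that when you tighten the invariant your pigeonhole margins shrink, and the per-element three-case split survives this more transparently than the sub-interval count as you wrote it.
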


 This result extends the theorem from \cite{book_libkin} stating that two linear orders of size at least $2^k$ cannot be distinguished by a $FO[k]$ formula.

The proof of this result provides a good intuition about the framework developed for the more complex Theorem \ref{theorem:result_counting}. It follows the pebble EF game of Definition \ref{pebble_game}, with bounded counting rank \cite{immerman1990describing}.

\begin{proof}
Suppose $L_1$ and $L_2$ are linear orders of length at least $(t+1)^k$, on which the EF game will be played.\\
After $i$ moves, we denote by $a$ the ``position'': $a$ is a tuple consisting of $min_{L_1},max_{L_1}$ concatenated to the $i$ moves played on $L_1$: $a = (a_{-1}, a_0, a_1,\dotsc , a_i)$, $a_{-1} = min_{L_1}$ , $a_0 = max_{L_1})$. Similarly, we define the tuple $b$ of moves played on $L_2$.\\
For $-1 \leq j, l \leq i$, we prove that regardless of Spoiler's choices, Duplicator can maintain the following inequalities:
\begin{enumerate}
    \item if $d(a_j , a_l) \leq (t+1)^{k-i}$, then $d(b_j , b_l) = d(a_j , a_l)$.
    \item if $d(a_j , a_l) > (t+1)^{k-i}$, then $d(b_j , b_l) > (t+1)^{k-i}$.
    \item $a_j \leq a_l$ iff $b_j \leq b_l.$
\end{enumerate}
Using those inequalities for $i=k$ moves, property 3. yields that Duplicator win the $k$-round EF game on $L_1$ and $L_2$. Since this happens no matter what Spoiler plays, the game characterisation theorem for bounded counting rank implies that $L_1 \equiv_{\mathcal{C}^{t}[k]} L_2$.\\
We now prove by induction on the moves (on $i$) that the inequalities can be maintained. The base case of $i = 0$ is immediate.\\
For the induction step, assume the inequalities stand for $i$ moves, and suppose without loss of generality that Spoiler makes his $(i + 1)^{\text{th}}$ move on $L_1$. Spoiler plays the set $M$ of cardinality $k$, in $L_1$. We describe Duplicator's response, a set $N$ on $L_2$. In the second part of the move, if Spoiler picks $N_q$ on $L_2$ for $q \in [t]$ (i.e. $b_{i+1} = N_q$), Duplicator will respond with $M_q$ on $L_1$ (i.e. $a_{i+1} = M_q$). \\

If $a_j = M_q$ for $j \leq i$ and $q \in [t]$, Duplicator sets $N_q$ to be $b_j$. Suppose there is an element $M_q \in M$ for which this is not the case. We define $j,l \leq i$ such that $a_j < M_{q} < a_l$ and that there is no other previously played moves on $L_1$ inside this interval. By property 3., the interval between $b_j$ and $b_l$ contains no other elements of $b$. Then we have two cases regarding the length of the interval:

\begin{itemize}
    \item If $d(a_j , a_l) \leq (t+1)^{k-i}$. Then by property 1. and inductive hypothesis, $d(b_j , b_l) = d(a_j , a_l)$, and the intervals $[a_j , a_l]$ and $[b_j , b_l]$ are isomorphic. Duplicator picks $N_{q}$ so that $d(a_j , M_{q}) = d(b_j , N_{q})$ and $d(M_{q}, a_l) = d(M_{q}, b_l)$, which ensures that the three properties hold for $i+1$ moves.
    \item  $d(a_j , a_l) > (t+1)^{k-i}$. In this case by property 2., $d(b_j , b_l) > (t+1)^{k-i}$. We have three possibilities: \begin{itemize}
        \item If $d(a_j , M_{q}) \leq (t+1)^{k-(i+1)}$. Then $d(M_{q}, a_l) > (t+1)^{k-(i+1)}$, and Duplicator picks $N_{q}$ on $[b_j,b_l]$ so that $d(b_j , N_{q}) = d(a_j , M_{q})$ maintaining properties 1 and 3. Since $d(M_{q}, b_l) > (t+1)^{k-(i+1)}$ this maintains $d(N_{q}, b_l) > (t+1)^{k-(i+1)}$ by property 2.$(i)$, hence property 2.$(i+1)$.
    \item $d(M_{q}, a_l) \leq (t+1)^{k-(i+1)}$, in which case a similar reasoning applies.

    \item Otherwise, Spoiler has picked $M_q$ such that both: $d(a_j , M_q ) > (t+1)^{k-(i+1)} $, $\, d(M_q , a_l ) > (t+1)^{k-(i+1)}$. There can be at most $t$ such $M_q$ as $|M| = t$.\\
    Since $d(b_j , b_l) > (t+1)^{k-i}$, there are at least $t$ distinct elements in $L_2$ between $b_j+ (t+1)^{k-(i+1)}$ and $b_l - (t+1)^{k-(i+1)}$. Duplicator picks the first $N_q$ on that interval that is not already in $N$. This ensures that, $d(b_j , N_q) > (t+1)^{k-(i+1)}$ and $d(N_q, b_l) > (t+1)^{k-(i+1)}$, therefore satisfying all three properties.
    \end{itemize}

\end{itemize}
Thus, in all the cases, the induction is preserved.
\end{proof}

\subsection{Upper bound for the minimal size necessary to distinguish linear orders}
The following proposition gives an upper bound on the size of ${\mathcal C}_2^t$- formulas distinguishing
linear orders of different sizes.
\begin{proposition}
\label{proposition:upper_bound}
There is a ${\mathcal C}^t_2$-formula $\phi$ of size $O(n/t)$ such that for $n < m$ it holds that $(\mathcal{A}_n, \mathcal{A}_m ) \models \phi$.
\end{proposition}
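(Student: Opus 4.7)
The plan is to build a $\mathcal{C}_2^t$-sentence expressing ``there are fewer than $n+1$ elements strictly above $\min$'', which distinguishes $\mathcal{A}_n$ (with $n$ elements above $\min$) from every $\mathcal{A}_m$ with $m > n$ (which has at least $n+1$). The sentence is assembled from formulas $\chi_k(x)$ expressing ``there are at least $kt$ elements strictly above $x$'', defined recursively in a way that alternately swaps the roles of $x$ and $y$, so that only two variables are ever used and the counting rank stays at $t$.

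Set $\chi_1(x) := \exists^{\geq t} y\,(y > x)$ and, for $k \ge 2$,
\[
\chi_k(x) \;:=\; \exists^{\geq t} y \,\bigl(y > x \,\wedge\, \tilde\chi_{k-1}(y)\bigr),
\]
where $\tilde\chi_{k-1}(y)$ is $\chi_{k-1}(x)$ with the names $x$ and $y$ interchanged. Correctness is a straightforward induction on $k$: in any $\mathcal{A}_N$, the satisfaction set $\{y : \tilde\chi_{k-1}(y)\}$ is the downward-closed initial segment $\{0,\ldots,N-(k-1)t\}$, so requiring at least $t$ such $y$'s strictly above $x$ is equivalent to $x \le N - kt$, i.e., to the existence of at least $kt$ elements strictly above $x$. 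A direct count gives $|\chi_k| = 2k$, each counting quantifier has index $t$, and only the variables $x, y$ ever occur.

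To pin the threshold to $n+1$ exactly, write $n+1 = qt + r$ with $1 \le r \le t$ and take
\[
\phi \;:=\; \neg\,\exists^{\geq r} y \,\bigl(y > \min \,\wedge\, \tilde\chi_{q}(y)\bigr).
\]
By the induction above, the inner formula holds in $\mathcal{A}_N$ iff at least $r$ elements of the segment $\{0,\ldots,N-qt\}$ exceed $\min$, iff $N \ge qt + r = n+1$. Hence $\phi$ holds exactly when $N \le n$, giving the required distinguishing behavior. The counting rank of $\phi$ is $t$, only two variables appear, and $|\phi| = 3 + 2q = O(n/t)$. The degenerate case $q = 0$ (when $n < t$) is handled directly by the constant-size sentence $\neg\,\exists^{\geq n+1} y\,(y > \min)$, which satisfies the bound trivially.

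No substantive obstacle is anticipated. The only delicate point is the variable-recycling between $x$ and $y$ at successive nesting depths, but this is the standard device for staying inside the two-variable fragment on linear orders; once it is in place, the bounds on size, variable count, and counting rank are immediate from the recursive definition.
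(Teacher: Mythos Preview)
Your proof is correct and follows essentially the same approach as the paper: both build a tower of counting quantifiers of depth $O(n/t)$ expressing ``at least $kt$ elements on one side of $x$'', alternating the roles of $x$ and $y$ to stay inside the two-variable fragment, and then wrap it in a negation to cut off at $n$. The only cosmetic differences are that the paper counts downwards (using $y<x$) and closes with an outer $\exists x$, whereas you count upwards and anchor the last step at the constant $\min$; neither choice affects size, counting rank, or the number of variables.
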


  \begin{proof}
  Consider the formulae:
\begin{enumerate}
    \item $\phi_{0}(x) = (x=x)$.
    \item $\phi_{t (l+1)}(x) = \exists^{\geq t} y ((y<x) \wedge \phi_{t l}(y))$.
    \item For $n=tl + p$ with $p<t$: $\phi_{tl + p }(x) = \exists^{\ge p} y ((y<x) \wedge \phi_{t l}(y))$
\end{enumerate}
Let $\phi = \neg \exists x \,\, \phi_{n+1}(x)$
Then a linear order satisfies $\phi$ iff it has size at most $n$.

The size of  $\phi$ is $O(n/t)$.
\end{proof}

\subsection{Lower bound for the size necessary of a $\mathcal{C}^t_3$ formula to distinguish linear orders}
Using the CS$^{3,t}$ game introduced in Section \ref{sec:gchar}, we derive the following lower bound for the size of a $\mathcal{C}^t_3$ formula distinguishing linear orders:

\begin{theorem}\label{theorem:result_counting}
If $\phi$ is a $\mathcal{C}_3^t$-formula distinguishing $\mathcal{A}_n$ and $\mathcal{A}_m$,
where $n < m$, then $|\phi| \ge  \frac{\sqrt{n}}{t}$.     
\end{theorem}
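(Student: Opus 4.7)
The plan is to use Corollary~\ref{theorem:counting_size_bounded_counting_rank} contrapositively: to show that no $\mathcal{C}_3^t$-formula of size $w < \sqrt{n}/t$ distinguishes $\mathcal{A}_n$ from $\mathcal{A}_m$, it suffices to exhibit a winning Duplicator strategy in the game CS$_w^{3,t}(A_0,B_0)$, where $A_0 = \{(\mathcal{A}_n,\emptyset)\}$ and $B_0 = \{(\mathcal{A}_m,\emptyset)\}$. The strategy is built around a potential-function argument: to every position $(A,B)$ arising during play we attach a nonnegative integer weight $W(A,B)$ derived from a \emph{separator} structure, generalising the separators of~\cite{grohe_succinc}. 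Because at most three variables carry values at any time, the pebbles together with $\min$ and $\max$ produce a constant number of landmarks in each order, cutting it into intervals whose length pattern the separator records. The weight is arranged so that $W(A_0,B_0) = \Omega(n)$ and $W(A,B) = 0$ whenever some atomic formula distinguishes $A$ from $B$; moreover, $\sqrt{W}$ will serve as the actual potential measuring how much formula-size Duplicator can still absorb.

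The core inductive claim I would prove, by induction on $w$, is that from any position $(w,A,B)$ with $\sqrt{W(A,B)} \ge wt$, Duplicator has a winning response to every Spoiler move, producing a successor $(w',A',B')$ satisfying $\sqrt{W(A',B')} \ge w' t$. The Boolean cases are standard and will be essentially identical to~\cite{grohe_succinc}. The $\neg$-move only swaps the roles of $A$ and $B$ and preserves $W$. For $\bigvee$ (and symmetrically $\bigwedge$), Spoiler chooses $u+v=w$ and a partition $A = C \cup D$; a subadditivity property of $W$ with respect to families, combined with the bounded number of separator types arising with at most three variables, forces either $\sqrt{W(C,B)} \ge u t$ or $\sqrt{W(D,B)} \ge v t$, and Duplicator moves to that side.

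The delicate case, which the introduction flags as the main technical contribution of Section~\ref{sec:sep}, is the $\exists^{\geq k}$-move with $k \le t$ (the $\forall^{\geq k}$-move being symmetric). Here Spoiler picks a variable $x_j$ and a $k$-choice function $F^k$ on $A$ placing $k$ pairwise distinct new pebbles, and Duplicator must answer a selection for every $k$-choice function on $B$ so as to preserve the weight bound. This is where the new notions of \emph{gap set} (Definition~\ref{definition:gap}) and \emph{gap variable} (Definition~\ref{definition:gap_variables}) come in: the gap set reserves, inside each long interval of the current separator, a pool of positions on the $B$-side rich enough to realise any order-preserving pattern of at most $t$ new pebbles consistent with the separator on the $A$-side, and the gap variable marks which of the three variables currently anchors that long gap and may therefore be safely overwritten. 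The essential quantitative fact to be extracted is that subdividing an interval of length $L$ by $k \le t$ new pebbles creates sub-intervals of length at least roughly $L/(t+1)$, so $\sqrt{W}$ drops by at most $O(t)$, matching the budget decrement of exactly $1$ in $w$ after multiplying by $t$.

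Putting the pieces together, starting from $\sqrt{W(A_0,B_0)} = \Omega(\sqrt{n})$ and applying the inductive claim $w$ times, an atomic-winning position with $W=0$ cannot be reached unless $w \ge \sqrt{n}/t$, which proves the theorem. The main obstacle, as in~\cite{grohe_succinc}, is precisely the quantifier case: Spoiler may concentrate all $t$ allowed pebbles inside a single separator interval and force Duplicator to match the pattern on the other side. The gap-set/gap-variable bookkeeping is what makes the matching feasible with the 3-variable restriction on landmarks, and it is this cleaner handling that simultaneously accommodates counting rank $t>1$ and tightens the FO case $t=1$ from $\sqrt{n}/2$ to $\sqrt{n}$.
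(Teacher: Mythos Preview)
Your potential-function framework is the game-theoretic dual of the paper's approach and could in principle be made to work, but the quantifier case as you describe it rests on a misreading of the CS game. In Definition~\ref{complete_game}, at an $\exists^{\geq k}$-move \emph{Spoiler} makes every choice: the variable $x_j$, the $k$-choice function $F^k$ on $A$, \emph{and} the selection $G$ from every $G^k$ on $B$; Duplicator has no move here (Duplicator only acts at $\vee/\wedge$-nodes). So ``Duplicator must answer a selection \ldots\ so as to preserve the weight bound'' is wrong, and describing gap sets as Duplicator's ``pool of positions on the $B$-side rich enough to realise any order-preserving pattern'' inverts their role. In the paper (Lemma~\ref{key:Lemma}, Case~6) gaps appear in a contradiction argument: assuming the candidate separator $\delta$ fails on the parent $\langle I,J\rangle$, one \emph{constructs} a particular $G^k$ on $B$ such that whichever $J_G$ Spoiler selected from it cannot be separated from the matching $I^i_{F^k}$ by $\delta_1$, contradicting that $\delta_1$ separates the child. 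The gaps supply the $b_i$'s for this existential witness; they are not a Duplicator resource.

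The paper in fact never runs a top-down Duplicator argument: it fixes a distinguishing formula $\psi$, builds the extended syntax tree (Definition~\ref{definition:syntax_tree}), proves the \emph{bottom-up} recurrence of Lemma~\ref{lemma:separator} on minimal-separator weights, and converts it via Lemma~\ref{lemma:tree} into $|\psi|\ge w(\delta_{\mathrm{root}})/t=\sqrt{n}/t$. Your quantitative heuristic, ``subdividing an interval of length $L$ by $k\le t$ pebbles gives pieces of length $\ge L/(t+1)$'', is the engine of the quantifier-\emph{rank} bound (Theorem~\ref{quant_depth}), not the size bound; the size argument depends instead on the specific weight $w(\delta)=\sqrt{c(\delta)^2+b(\delta)}$ and the additive $k-1$ terms in the separator of Lemma~\ref{key:Lemma}, which the calculation in Appendix~\ref{appendix:consequence_weight} turns into $w(\delta)\le w(\delta_1)+k\le w(\delta_1)+t$.
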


For $t=1$, i.e. for FO$_3$, this theorem improves on the best known lower bound of \cite{grohe_succinc} from $\sqrt{n}/2$ to $\sqrt{n}$.\\
The rest of the paper is dedicated to proving this result.
\section{Proof of Theorem \ref{theorem:result_counting}}
\label{section:proof}
\subsection{Description of the framework}
We extend the framework for FO of ~\cite{grohe_succinc} to counting logic.
We start by describing the main concepts: the counting game derived extended syntax trees, the separators, and the weighting scheme.
\subsubsection{Extended Syntax Tree}
Given two families $A,B$, the extended syntax tree represents a winning strategy for Spoiler in the CS$^{3,t}$ game on $(A,B)$. It assigns to each tree node $v$ a pair of families $il(v)$ (``interpretation label''), along with Spoiler's move $sl(v)$ (``syntax label'') in this position such that:
\begin{enumerate}
    \item A node $v_1$ is a child of node $v$ if position $il(v_1)$ can be obtained in one move from $il(v)$,
    \item A node $v$ is a leaf if $il(v)$ satisfies the atomic win condition. 
\end{enumerate}
The root is associated to the starting position of the game CS$^{3,t}$, and we consider the nodes associated to positions reached through a winning strategy.
\begin{definition}
\label{definition:syntax_tree}
Let $\psi$ be an $\mathcal{C}^t_3$-formula, let $A$ and $B$ be families of interpretations such that $(A,B) \models \psi$. By induction on the construction of $\psi$, we define an extended syntax tree $T_{\psi}^{\langle A,B \rangle}$ as follows:
\begin{itemize}
    \item If $\psi$ is an atomic formula, then $T_{\psi}^{\langle A,B \rangle}$ consists of a single node $v$ that has a syntax label
$sl(v) := \psi$ and an interpretation label $il(v) := \langle A,B \rangle$.
    \item If $\psi$ is of the form $ \neg \psi_1$, then $T_{\psi}^{\langle A,B \rangle}$ has a root node $v$ with $sl(v) := \neg$ and $il(v) := \langle A,B \rangle$. The unique child of $v$ is the root of $T_{\psi_1}^{\langle B,A \rangle}$. Note that $(B,A) \models  \psi_1$.
    
    \item If $\psi$ is of the form $\psi_1 \vee \psi_2$, then $T_{\psi}^{\langle A,B \rangle}$ has a root node $v$ with $sl(v) := \vee$ and $il(v) := \langle A,B \rangle$. The first child of $v$ is the root of $T_{\psi_1}^{\langle A_1,B \rangle}$ and the second child of $v$ is the root of $T_{\psi_2}^{\langle A_2,B \rangle}$, where $A_i = \{(\mathcal{A}, \alpha) \in A : (\mathcal{A}, \alpha) \models \psi_i\}$ for $i \in \{1, 2\}$. Note that $A= A_1 \cup A_2$ and $(A_i,B) \models \psi_i$.
    
    \item If $\psi$ is of the form $\psi_1 \wedge \psi_2$, then $T_{\psi}^{\langle A,B \rangle}$ has a root node $v$ with $sl(v) := \wedge$ and $il(v) := \langle A,B \rangle$. The first child of $v$ is the root of $T_{\psi_1}^{\langle A,B_1 \rangle}$ and the second child of $v$ is the root of $T_{\psi_2}^{\langle A,B_2 \rangle}$, where $B_i = \{(\mathcal{B}, \beta) \in B : (\mathcal{B}, \beta) \models \neg \psi_i\}$ for $i \in \{1, 2\}$. Note that $B = B_1\cup B_2$, $(A,B_i) \models \psi_i$.

    \item If $\psi$ is of the form $\exists^{\geq k} u \psi_1$, for a variable $u \in \{x, y, z\}$ and $k \leq t$, then $T_{\psi}^{\langle A,B \rangle}$ has a root node $v$ with $sl(v) := \exists^{\geq k} u$ and $il(v) := \langle A,B \rangle$. The unique child of $v$ is the root of $T_{\psi_1}^{\langle A(F^k/u),B(*^k/u) \rangle}$, where $F^k \in F^k_A$ is chosen so that $A(F^k/u) \models \psi_1$, and for every $G^k \in F^k_B$, $G$ is selected from $G^k$ so that $B(G/u) \models \neg \psi_1$. Since $B(*^k/u) = \bigcup_{G^k \in F^k_B} B(G/u) $, note that \\
    $(A(F^k/u),B(*^k/u)) \models \psi_1$.

    \item If $\psi$ is of the form $\forall^{\geq k} u \psi_1$, for a variable $u \in \{x, y, z\}$ and $k \leq t$, then $T_{\psi}^{\langle A,B \rangle}$ has a root node $v$ with $sl(v) := \forall^{\geq k} u$ and $il(v) := \langle A,B \rangle$. The unique child of $v$ is the root of $T_{ \psi_1}^{\langle A(*^k/u),B(F^k/j) \rangle}$, where $F^k \in F^k_B$ is chosen so that $B(F^k/u) \models \neg \psi_1$, and for every $G^k \in F^k_A$, $G$ is selected from $G^k$ so that $A(G/u) \models \psi_1$. Since $A(*^k/u) = \bigcup_{G^k \in F^k_A} A(G/u) $, note that $(A(*^k/u),B(F^k/u)) \models \psi_1$.

\end{itemize}
\end{definition}
We define $|T_{\psi}^{\langle A,B \rangle}|$ to be the number of nodes in the tree $T_{\psi}^{\langle A,B \rangle}$. It is trivial to check by induction that $|T_{\psi}^{\langle A,B \rangle}| = |\psi|$.\\

Next we specify separators, the key concept to study succinctness on linear orders. Separators express the distance from which, at a given position in the game (i.e. at a certain size in the formula), we are able to distinguish two elements of linear orders.
This distance differs for each element of $\mathcal{P}_2 (\{min, max, x, y, z\})$, where the $\mathcal{P}_2$ operator describes the subsets of size two.

\subsubsection{Separators and Weighting Scheme}

\begin{definition}[separator]
Let $A$ and $B$ be sets of interpretations. A separator is a mapping $\delta : \mathcal{P}_2 (\{min, max, x, y, z\}) \to \mathbb{N} $. $\delta$ is called separator for $\langle A,B \rangle$, if the following is satisfied:\\

For every $I := (\mathcal{A}, \alpha) \in A$ and $J := (\mathcal{B}, \beta) \in B $, there are $u, u' \in \{min, max, x, y, z\}$ with $u \neq u'$, such that: 

\begin{enumerate}
    \item $<$-type$(\alpha(u), \alpha(u')) \neq $$<$-type$(\beta(u), \beta(u'))$  or
    \item both:
    \begin{itemize}
        \item MIN$[\text{d}(\alpha(u), \alpha(u' )) , \text{d}(\beta(u), \beta(u'))] \leq \delta(\{u, u'\})$ and,
        \item $\text{d}(\alpha(u), \alpha(u' )) \neq \text{d}(\beta(u), \beta(u' ))$.
    \end{itemize}
\end{enumerate}

\noindent where, for $a,b \in \mathbb{N}$, d$(a,b) := |a-b|$ and $<$-type$(a,b)$ is $"=", "<" $ or $ ">"$ reflecting the order between $a$ and $b$.
\end{definition}

\begin{definition}[weight of a separator and minimal separator]
\label{definition:weight}
Let $\delta$ be a separator, we define its:
\begin{enumerate}
    \item border-distance $b(\delta) :=$ MAX $\{ \delta(\{min, max\}),$ \\ $\delta(\{min, u\}) + \delta(\{u', max\}) : u, u' \in \{x, y, z\} \}$
    \item centre-distance
$c(\delta) :=$ MAX $  \{\delta(p) + \delta(q) : p, q \in \mathcal{P}_2(\{x, y, z\}), p \neq q \}$
    \item weight $w(\delta) := \sqrt{c(\delta)^2 + b(\delta)}$.
\end{enumerate}
\end{definition}

The weight is a measure of the distinguishing power of separators. We call \textit{minimal separator} of a pair of structures a separator of minimal weight.
\subsection{Proof of Theorem~\ref{theorem:result_counting}}

We introduce the main technical lemma of the proof, which links separators weights to the structure of the tree.
\begin{lemma}
\label{lemma:separator}
Suppose $(A,B) \models \psi$ and let $T$ be the extended syntax tree $T_{\psi}^{\langle A,B \rangle}$.
For every node $v$ of $T$ the following is true:
\begin{enumerate}
    \item If $v$ is a leaf, then $w(\delta) \leq 1$.
    \item If $v$ has 2 children $v_1$ and $v_2$, and $\delta_i$ is a minimal separator for $il(v_i)$, for $i \in \{1, 2\}$, then $w(\delta) \leq w(\delta_1) + w(\delta_2)$.
    \item If $v$ has exactly one child $v_1$, and $\delta_1$ is a minimal separator for $il(v_1)$, then $w(\delta) \leq w(\delta_1) + t$.
\end{enumerate}
where $\delta$ is a minimal separator for $il(v)$.
\end{lemma}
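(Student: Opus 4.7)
My plan is to prove Lemma~\ref{lemma:separator} by structural induction on the extended syntax tree $T = T_{\psi}^{\langle A,B \rangle}$, working from the leaves upward. At each node $v$ I will construct an explicit separator $\delta$ for $il(v)$ from the minimal separators at its children and verify the claimed weight bound directly; since the weight of a minimal separator is dominated by that of any separator, this suffices.

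For the leaf case, $v$ is labeled by an atomic formula $\varphi$ with $(A,B) \models \varphi$. Every atomic formula over the signature $\{min, max, <, succ\}$ involves at most two terms drawn from $\{min, max, x, y, z\}$ and either pins down their $<$-type or forces their distance to be $1$ (for $succ$). I would set $\delta$ to assign $1$ to the single pair appearing in $\varphi$ and $0$ elsewhere; a short case split on whether that pair involves $min$, $max$, or lies inside $\{x, y, z\}$ gives $b(\delta), c(\delta) \leq 1$, and therefore $w(\delta) \leq 1$.

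For the two-child cases ($sl(v) \in \{\vee, \wedge\}$), I will let $\delta_1, \delta_2$ be minimal separators for the children and take $\delta := \max(\delta_1, \delta_2)$ pointwise on $\mathcal{P}_2(\{min, max, x, y, z\})$. Correctness uses the split $A = A_1 \cup A_2$ (or $B = B_1 \cup B_2$ in the $\wedge$-case): each interpretation on the relevant side lies in some $A_i$, so the pair witnessing $\delta_i$ continues to witness $\delta \geq \delta_i$ (the $<$-type condition is unaffected and the distance bound only weakens). Using $\max(a,b) \leq a+b$ coordinatewise yields $c(\delta) \leq c(\delta_1) + c(\delta_2)$ and $b(\delta) \leq b(\delta_1) + b(\delta_2)$, and the target $w(\delta) \leq w(\delta_1) + w(\delta_2)$ reduces after squaring to
\[
c(\delta_1)^2 c(\delta_2)^2 \;\leq\; \bigl(c(\delta_1)^2 + b(\delta_1)\bigr)\bigl(c(\delta_2)^2 + b(\delta_2)\bigr),
\]
which is immediate. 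For the $\neg$-node I have $il(v_1) = (B, A)$, and since the separator definition is symmetric in its two families, $\delta := \delta_1$ works with the same weight.

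The remaining case, $sl(v) \in \{\exists^{\geq k} u, \forall^{\geq k} u\}$ with $k \leq t$, is the main obstacle and is the content of Section~\ref{sec:sep}. Here I must promote a separator $\delta_1$ for $il(v_1) = (A(F^k/u), B(*^k/u))$ to a separator $\delta$ for $(A,B)$ with $w(\delta) \leq w(\delta_1) + t$. Two difficulties appear: the witnessing pair supplied by $\delta_1$ may involve the newly quantified variable $u$ and must then be rerouted through a still-placed variable; and the $k$-multiplication on $B$ presents Spoiler with up to $k \leq t$ candidate values of $u$ per interpretation that must all be defeated simultaneously. My plan is to stratify interpretations by where the $u$-witness falls among the existing terms, using the \emph{gap sets} and \emph{gap variables} of Definitions~\ref{definition:gap} and~\ref{definition:gap_variables}, and within each stratum to inflate a single coordinate of $\delta_1$ by at most $t$ so as to absorb all candidate witnesses at once. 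Bumping a coordinate of the separator by $t$ raises either $c(\delta)$ or $b(\delta)$ by at most $t$ (the other staying unchanged), and in both cases the elementary inequality $\sqrt{(c+t)^2 + b} \leq \sqrt{c^2+b} + t$, which is equivalent to $c \leq \sqrt{c^2+b}$, delivers the required bound $w(\delta) \leq w(\delta_1) + t$.
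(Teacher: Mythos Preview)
Your treatment of Parts 1, 2, and the $\neg$-subcase of Part 3 is essentially the paper's argument (you use the pointwise $\max$ where the paper uses the sum $\delta_1+\delta_2$, but since the separator property is monotone and $\max\le +$, both work and lead to the same $c(\delta)\le c(\delta_1)+c(\delta_2)$, $b(\delta)\le b(\delta_1)+b(\delta_2)$ bounds).

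The genuine gap is in the quantifier subcase of Part 3. Your plan is to ``inflate a single coordinate of $\delta_1$ by at most $t$'' and then argue that either $c$ or $b$ rises by at most $t$ while the other stays fixed. This cannot yield a separator for $il(v)$. When the $\delta_1$-witness for a pair $(I,J)$ uses the quantified variable $u$, the rerouting you yourself flag forces, for the remaining variables $u',u''$, something like
\[
\delta(\{u',u''\}) \;\ge\; \delta_1(\{u',u\}) + \delta_1(\{u,u''\}) + (k-1),
\]
and analogously for the border pairs $\{m,u'\}$. The increase on each such coordinate is of order $c(\delta_1)$, not $t$: take $u=z$, $\delta_1(\{x,z\})=\delta_1(\{y,z\})=N$ and all other values $0$; any separator for the parent must have $\delta(\{x,y\})\ge 2N$, so ``bump one coordinate by $t$'' is hopeless.

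What actually happens (Lemma~\ref{key:Lemma} and Appendix~\ref{appendix:consequence_weight}) is that after this rerouting and setting $\delta(\{u,\cdot\})=0$, one gets
\[
c(\delta)\le c(\delta_1)+k-1,\qquad b(\delta)\le b(\delta_1)+2c(\delta_1)+2(k-1),
\]
so the border distance jumps by a term proportional to $c(\delta_1)$, not by $t$. The weight bound then follows from
\[
w(\delta)^2 \le (c(\delta_1)+k-1)^2 + b(\delta_1) + 2c(\delta_1) + 2(k-1) = w(\delta_1)^2 + 2k\,c(\delta_1) + k^2 - 1 \le (w(\delta_1)+k)^2,
\]
using $c(\delta_1)\le w(\delta_1)$. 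Your inequality $\sqrt{(c+t)^2+b}\le\sqrt{c^2+b}+t$ is correct but irrelevant, because $b$ does not stay fixed; the point is precisely that the large increase in $b$ is \emph{linear in $c(\delta_1)$}, which is what lets the square complete. The role of the gap sets and gap variables is to establish that the rerouted $\delta$ above is indeed a separator (the ``$+\,(k-1)$'' absorbing the $k$ candidate positions in each gap), not to limit the coordinate inflation to $t$.
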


The proofs of parts 1. and 2., related only to the syntax of first-order logic, are almost the same in our counting logic extended syntax tree framework as in~\cite{grohe_succinc}, and are given in Appendix \ref{appendix:lemma_parts_1_2}.
However, the proof of Lemma \ref{lemma:separator} part 3., which accounts for the counting quantifiers, 
requires different arguments, to take into account the ``multiple choices'' needed in counting logic, and is given in Section~\ref{sec:sep}.\\
The bound of part 3. extends the corresponding bound of~\cite{grohe_succinc} to counting logic. In fact,
for counting rank $1$, i.e., for first-order logic without counting, it improves the bound  from $w(\delta) \leq w(\delta_1) + 2$ to $w(\delta) \leq w(\delta_1) + 1$. This leads to the slight improvement of the lower bound of~\cite{grohe_succinc} mentioned after Theorem \ref{theorem:result_counting}.

From Lemma \ref{lemma:separator}'s inductive relation on the minimal weight of extended syntax tree nodes, we derive a lower bound on the size of the tree in terms of the minimal weight of the root.

\begin{lemma}
\label{lemma:tree}
 Let $T$ be a finite binary tree where each node $v$ is equipped with a weight
$w(v) > 0$ such that the following is true:
\begin{itemize}
    \item If v is a leaf, then $w(v) \leq 1$.
    \item If $v$ has 2 children $v_1$ and $v_2$, then
$w(v) \leq w(v_1) + w(v_2)$.
    \item If $v$ has exactly one child $v_1$, then $w(v) \leq w(v_1) + t$.
\end{itemize}
Then, $|T| \geq \frac{w(r)}{t}$, where $r$ is the root of $T$ and $|T|$ is the number of nodes of $T$.
\end{lemma}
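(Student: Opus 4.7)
The plan is to prove this by straightforward induction on $|T|$, the number of nodes of the tree. The three hypotheses on the weight function at leaves, one-child, and two-child nodes match exactly the three cases needed for the induction, so the argument reduces to checking the inequality in each case.

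For the base case, $T$ consists of a single node $r$, which is a leaf. Then by assumption $w(r) \leq 1$, and since we may assume $t \geq 1$, we have $|T| = 1 \geq 1/t \geq w(r)/t$, as required.

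For the inductive step, assume the bound holds for every tree with fewer nodes than $T$ and let $r$ be the root of $T$. If $r$ has two children $v_1, v_2$ with subtrees $T_1, T_2$, then the inductive hypothesis gives $|T_i| \geq w(v_i)/t$ for $i \in \{1,2\}$, so
\[
|T| \;=\; 1 + |T_1| + |T_2| \;\geq\; 1 + \frac{w(v_1) + w(v_2)}{t} \;\geq\; \frac{w(v_1) + w(v_2)}{t} \;\geq\; \frac{w(r)}{t},
\]
using the two-child hypothesis $w(r) \leq w(v_1) + w(v_2)$. If $r$ has a unique child $v_1$ with subtree $T_1$, then by the inductive hypothesis $|T_1| \geq w(v_1)/t$, and applying the one-child hypothesis $w(r) \leq w(v_1) + t$ yields
\[
|T| \;=\; 1 + |T_1| \;\geq\; 1 + \frac{w(v_1)}{t} \;=\; \frac{t + w(v_1)}{t} \;\geq\; \frac{w(r)}{t},
\]
which completes the induction.

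There is no real obstacle here, since each of the three structural cases of the induction corresponds directly to one of the three weight conditions, and the inequalities match up after multiplying through by $t$. The only mild subtlety is that the one-child step is the tight case: the additive slack of $t$ introduced by a one-child node is exactly what the factor $1/t$ in the bound $|T| \geq w(r)/t$ is calibrated to absorb, and it is precisely this calibration that motivates the division by $t$ (rather than by a constant) in the statement of the lemma, in contrast with the first-order case of~\cite{grohe_succinc}.
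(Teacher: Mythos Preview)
Your proof is correct and follows essentially the same approach as the paper's own proof: a direct structural induction on the tree, handling the leaf, one-child, and two-child cases exactly as you do. The only minor difference is that you make the assumption $t \geq 1$ explicit in the base case, which the paper leaves implicit (it holds since $t$ is a counting rank bound).
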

\begin{proof}
The proof is by induction:
\begin{itemize}
    \item If $T$ is a leaf,  $|T| = 1 \geq \frac{w(r)}{t}$.
    \item If $T$'s root starts with two children $v_1, v_2$ associated to the trees $T_1,T_2$. By inductive hypothesis $|T_i| \geq \frac{w(v_i)}{t}$, so $|T| = |T_1| +  |T_2| + 1 \geq \frac{w(v_1)}{t} + \frac{w(v_2)}{t} + 1 \geq \frac{w(r)}{t} $.
    \item Finally if $T$'s root starts with one child $v_1$ associated to the tree $T_1$. By inductive hypothesis $|T_1| \geq \frac{w(v_1)}{t}$, so $|T| = |T_1| + 1 \geq \frac{w(v_1)}{t} + \frac{t}{t} \geq \frac{w(r)}{t} $.
\end{itemize}
\end{proof}

Finally we can derive a lower bound for the size of the extended syntax trees on the counting logic with the main result:
\begin{theorem} \label{theorem:main}
Suppose $(A,B) \models \psi$ and $\delta$ is a minimal separator for $\langle A,B \rangle$, then $|\psi| \geq \frac{w(\delta)}{t}$.     
\end{theorem}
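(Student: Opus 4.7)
The plan is to obtain Theorem~\ref{theorem:main} as an immediate consequence of the two preceding lemmas, once we interpret them on the same underlying object. Concretely, I would take the extended syntax tree $T := T_{\psi}^{\langle A,B\rangle}$ given by Definition~\ref{definition:syntax_tree}, recall that $|T| = |\psi|$ (noted right after that definition), and then push the separator inequality in Lemma~\ref{lemma:separator} into the abstract tree inequality in Lemma~\ref{lemma:tree}.

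The key step is the translation from ``separators at nodes'' to ``weights at nodes.'' For each node $v$ of $T$, let $\delta_v$ be a minimal separator for $il(v)$, and define $w(v) := w(\delta_v)$. I then verify the three hypotheses of Lemma~\ref{lemma:tree} one by one, reading them off directly from Lemma~\ref{lemma:separator}: if $v$ is a leaf then $w(v) \le 1$; if $v$ has two children $v_1, v_2$ (the $\vee$ and $\wedge$ nodes) then $w(v) \le w(v_1) + w(v_2)$; and if $v$ has a single child $v_1$ (the $\neg$, $\exists^{\ge k}$ and $\forall^{\ge k}$ nodes) then $w(v) \le w(v_1) + t$. Applying Lemma~\ref{lemma:tree} to $T$ with this weight function yields $|T| \ge w(r)/t$, where $r$ is the root of $T$.

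To finish, I observe that $il(r) = \langle A, B\rangle$ by the base clauses of Definition~\ref{definition:syntax_tree}, so a minimal separator at $r$ has weight equal to $w(\delta)$, the minimal separator weight for $\langle A, B\rangle$. Chaining the equalities gives
\[
|\psi| \;=\; |T| \;\ge\; \frac{w(r)}{t} \;=\; \frac{w(\delta)}{t},
\]
which is the desired bound.

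The only non-mechanical issue is the positivity assumption $w(v) > 0$ in Lemma~\ref{lemma:tree}. The main obstacle, and therefore the point at which I would pause, is handling the possibility that $w(\delta_v) = 0$ at some node: if $(A,B) \models \psi$ the separator need not be strictly positive at every intermediate node. I would handle this either by a trivial reduction (if $w(\delta) = 0$ the inequality $|\psi| \ge w(\delta)/t$ is vacuous, so we may assume $w(\delta) > 0$ at the root and observe that along any branch reaching a leaf the weights stay non-negative, which is all the proof of Lemma~\ref{lemma:tree} actually uses), or by replacing $w(v)$ with $\max(w(v), \varepsilon)$ for a small $\varepsilon > 0$ and letting $\varepsilon \to 0$. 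Once this technicality is dispatched, the argument is just the two-line combination above.
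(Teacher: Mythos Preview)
Your proposal is correct and follows essentially the same route as the paper: build the extended syntax tree, assign each node the weight of a minimal separator for its interpretation label, invoke Lemma~\ref{lemma:separator} to verify the hypotheses of Lemma~\ref{lemma:tree}, and read off the bound at the root. You are in fact slightly more careful than the paper, which silently ignores the positivity hypothesis $w(v)>0$ of Lemma~\ref{lemma:tree}; your observation that the inductive proof of that lemma only needs $w(v)\ge 0$ is the right way to dispatch this.
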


\begin{proof}
Consider $T_{\psi}^{\langle A,B \rangle}$ the syntax tree for the pair $\langle A,B \rangle$, $\psi$.\\
We associate to each node $v$ of $T_{\psi}^{\langle A,B \rangle}$ a weight $w(v) := w(\delta_v)$, where $\delta_v$ is a minimal separator
for $il(v)$. From Lemmas \ref{lemma:separator} and \ref{lemma:tree}, we get that $|\psi| = |T_{\psi}^{\langle A,B \rangle}| \geq \frac{w(\delta)}{t}$, where $\delta$ is a minimal separator of $il(r) = <A, B>$.
\end{proof}

Finally, as a consequence, we get Theorem \ref{theorem:result_counting}.
\begin{proof}
Suppose $\psi$ is an $\mathcal{C}^t_3$-sentence such that
$\mathcal{A}_n \models\psi$ and $\mathcal{A}_m \models \neg \psi$. Let $\alpha$ be the assignment that assigns $x,y$ and $z$ to $0$. Consider the separator $\delta : \mathcal{P}_2 (\{min, max, x, y, z\}) \to \mathbb{N} $ defined as:
\begin{itemize}
    \item $\delta(\{min,max\}) = n$, 
    \item $\delta(\{u,u'\}) = 0$ for $(u,u') \in  \mathcal{P}_2 (\{min, max, x, y, z\}) \setminus \{(min,max)\}$.
\end{itemize}
It is easy to see that $w(\delta) = \sqrt{n}$ and that $\delta$ is a minimal separator for $\langle (\mathcal{A}_n,\alpha), (\mathcal{A}_m,\alpha) \rangle$. Then Theorem \ref{theorem:result_counting} follows from Theorem \ref{theorem:main} and this observation.

\end{proof}

\begin{figure*}[h!]
    \centering
    \includegraphics[width=0.8\linewidth]{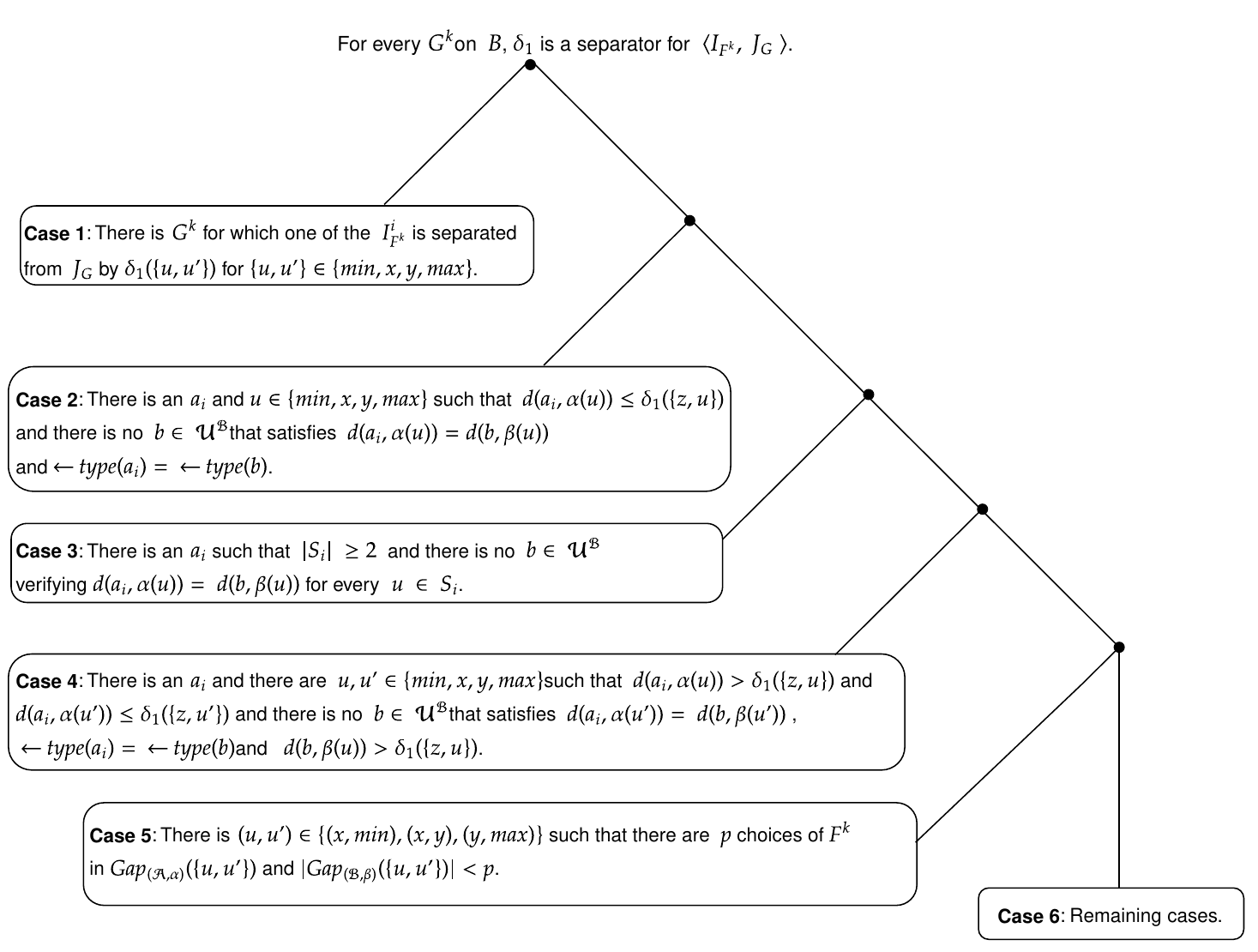}
    \caption{The structure of the proof.}
    \label{fig:proof_structure}
\end{figure*}
\subsection{Proof of Lemma \ref{lemma:separator}} 
\label{sec:sep}
The proof of Lemma \ref{lemma:separator} part 3. relies on the following key lemma:

\begin{lemma}\label{key:Lemma}Let $v$ be a node of $T$ that has syntax-label $sl(v) = Q^{\geq k} u$ for $Q \in \{\exists, \forall\}$, $k \leq t$ and $u \in \{x, y, z\}$. Let $\delta_1$ be a separator for $il(v_1)$, where $v_1$ is the unique child of $v$ in $T$. Let $\delta$ be the separator defined via:
\begin{itemize}
    \item $\delta(\{u, u'\}) := 0$ , for all $u' \in \{min, max, x, y, z\} \setminus \{u\}$,
    
    \item 
 $\delta(\{min, max\}) := $MAX$ \{ \delta_1(\{min, max\}), \delta_1(\{min, u\}) \\+ \delta_1(\{u, max\}) + k-1  \}$.
 
    And for all $u', u''$ such that $\{x, y, z\} = \{u, u', u''\}$ and all $m \in \{min, max\}$:
    \item $\delta(\{u', u''\}) := $MAX$ \{ \delta_1(\{u', u''\}), \delta_1(\{u', u\}) $\\
    $+ \delta_1(\{u, u''\}) + k-1  \}$,
    
    \item $\delta(\{m, u'\}) := $MAX$\{ \delta_1(\{m, u'\}) , \delta_1(\{m, u\}) $ \\ $+ \delta_1(\{u, u'\}) + k-1 \}$ .
\end{itemize}
Then,  $\delta$ is a separator for $il(v)$. 
\end{lemma}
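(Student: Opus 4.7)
The plan is to prove the case $Q = \exists$; the case $Q = \forall$ follows by the usual $A \leftrightarrow B$ symmetry of the separator conditions and of the moves. I argue by contrapositive: given $(\mathcal{A},\alpha) \in A$ and $(\mathcal{B},\beta) \in B$ that are \emph{not} separated by $\delta$, I will produce a pair of interpretations $(\alpha_\star, \beta_\star) \in A(F^k/u) \times B(*^k/u)$ not separated by $\delta_1$, contradicting the separator hypothesis on $il(v_1)$.

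First, fix Spoiler's data: let $a_1, \dots, a_k$ be the pairwise distinct values $F^k_i(\mathcal{A}, \alpha)$ used in the $k$-change, put $\alpha_i := \alpha(a_i/u)$, and let $T \subseteq U^{\mathcal{B}}$ be the set of $b$ for which $(\mathcal{B}, \beta(b/u)) \in B(*^k/u)$. Because Spoiler's selection from any $k$-choice function on $(\mathcal{B}, \beta)$ picks one of $k$ distinct elements, Spoiler can omit at most $k-1$ elements of $U^{\mathcal{B}}$, giving $|U^{\mathcal{B}} \setminus T| \le k-1$. For every pair $\{u_0, u_0'\}$ not involving $u$, the assignments $\alpha_i$ and $\alpha$ (respectively $\beta_b := \beta(b/u)$ and $\beta$) agree on $u_0, u_0'$, and $\delta(\{u_0, u_0'\}) \ge \delta_1(\{u_0, u_0'\})$ by construction; consequently, non-separation of $(\alpha, \beta)$ by $\delta$ on such a pair transfers verbatim to non-separation of $(\alpha_i, \beta_b)$ by $\delta_1$, for every $i \in [k]$ and every $b \in T$. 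The task reduces to finding a pair $(i, b) \in [k] \times T$ such that for every $v \in \{\min, \max, x, y, z\} \setminus \{u\}$ the pair $\{u, v\}$ also fails to separate $\alpha_i, \beta_b$ by $\delta_1$.

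This last non-separation asks that the $<$-types of $(a_i, \alpha(v))$ and $(b, \beta(v))$ agree and that either $d(a_i, \alpha(v)) = d(b, \beta(v))$ or both distances strictly exceed $\delta_1(\{u, v\})$. Unpacking the non-separation of $(\alpha, \beta)$ by $\delta$ on pairs $\{v, v'\}$ not involving $u$---via the explicit formulae for $\delta$---the anchor configurations $\{\alpha(v)\}_v$ and $\{\beta(v)\}_v$ become order-isomorphic and isometric on short distances, with tolerance enlarged by exactly the $+(k-1)$ correction appearing in $\delta$. For each fixed $i$, the admissible set $S_v(i)$ is therefore the disjoint union of at most one ``aligned'' point (the translate of $a_i$ through the anchor $\alpha(v)$) and a far-away half-line on the correct side of $\beta(v)$; the $+(k-1)$ slack forces the intersection $\bigcap_v S_v(i)$ to be consistent, so that choosing a canonical reference anchor $v^\star(i)$ (a \emph{gap variable} in the paper's terminology) yields a single well-defined candidate $b_i \in U^{\mathcal{B}}$, and distinct $a_i$ produce distinct candidates $b_i$ because the induced partial translation respects the linear order.

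The endgame is then a pigeonhole: we have $k$ distinct candidates $b_1, \dots, b_k$, and $|U^{\mathcal{B}} \setminus T| \le k-1$, so at least one $b_i$ lies in $T$, and $(\alpha_i, \beta_{b_i})$ is the sought-after unseparated witness. I expect the main obstacle to lie in the third paragraph: isolating a good reference anchor $v^\star(i)$ so that the matching and half-line branches of $S_v(i)$ agree across all $v$, particularly when some $a_i$ is close to several anchors simultaneously or straddles a large gap between consecutive anchors. This is precisely what the paper's notions of gap sets and gap variables are designed to formalize, and the calibration of $\delta$ with the $+(k-1)$ term is chosen exactly to match the $|U^{\mathcal{B}} \setminus T| \le k-1$ avoidance budget; getting this calibration to work uniformly across all order-type sub-cases is the technical heart of the argument.
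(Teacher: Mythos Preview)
Your overall strategy---argue by contrapositive, dispose of pairs not involving $u$ by the monotonicity $\delta \ge \delta_1$, then for each $a_i$ construct a matching $b_i \in U^{\mathcal B}$ and derive a contradiction---is exactly the skeleton of the paper's proof. Your pigeonhole endgame via the (correct) observation $|U^{\mathcal B}\setminus T|\le k-1$ is equivalent to, though slightly more indirect than, the paper's Case~6: once the $b_i$ are pairwise distinct, the paper simply forms the single $k$-choice $G^k=\{b_1,\dots,b_k\}$ and uses whichever $b_j$ Spoiler selected from it, so no global description of $T$ is needed.

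One point to straighten out: the $+(k-1)$ term in $\delta$ is \emph{not} calibrated to your avoidance budget $|U^{\mathcal B}\setminus T|\le k-1$; those two occurrences of $k-1$ are coincidental. The role of $+(k-1)$ is entirely to guarantee that distinct $b_i$'s exist. For $a_i$'s lying in a \emph{gap} (far from every anchor), one needs the corresponding gap on the $\mathcal B$-side to hold at least as many elements; when it does not, the $+(k-1)$ slack in $\delta$ is precisely what lets the gap-bounding pair $\{v,v'\}$ witness separation directly (the paper's Case~5). The anchor-near cases (the paper's Cases~2--4), which pin down a unique $b_i$ by distance-matching and show the required $<$-type and far-threshold conditions transfer, use only the triangle-type bound $\delta(\{v,v'\})\ge\delta_1(\{v,u\})+\delta_1(\{u,v'\})$ and never the extra $+(k-1)$. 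So your third paragraph is indeed where all the work lives, and that work is a genuine case analysis on the position of each $a_i$ relative to the anchors---not a single ``consistency of $\bigcap_v S_v(i)$'' argument---with distinctness of the $b_i$ coming from (i) uniqueness of the distance-matched point in the near case and (ii) sufficient gap size in the far case.
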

Lemma~\ref{key:Lemma} precisely captures how counting quantifiers increase the distinguishing power of separators. From Lemma \ref{key:Lemma}, we obtain Lemma \ref{lemma:separator} part 3. by building inequalities on the weight, which is done in Appendix \ref{appendix:consequence_weight}.

\smallskip

\subsubsection{Overview of the proof of Lemma \ref{key:Lemma}}

\begin{proof}
Due to symmetry, we only consider the case $\exists^{\geq k} z$.
It has to be shown that $\delta$ is a separator for $\langle A, B \rangle = il(v)$. By definition,
$il(v_1) = \langle A(F^k/z),B(*^k/z) \rangle$.
Consider $I = (\mathcal{A}, \alpha) \in A $ and $J =  (\mathcal{B}, \beta) \in B $. 
Let:
\begin{itemize}

    \item $I_{F^k} := \{(\mathcal{A}, \alpha(F^k_i(\mathcal{A},\alpha)/z)\}$ be the set of interpretations generated from $I$ in $v_1$, and let $a_i = F^k_i(\mathcal{A},\alpha)$ and $ I^i_{F^k} = ({\mathcal A},\alpha(a_i/z))$, $1 \leq i \leq k$.

    \item For any $k$-choice function $G^k$ on $B$, let $G$ be the choice function selected from $G^k$, and let $J_{G}$ be the interpretation selected from $J_{G^k} = \{(\mathcal{B}, \beta(G^k_i(\mathcal{B}, \beta)/z))\}$. We define
     $b_i := G^k_i(\mathcal{B},\beta)$; note that $J_{G}$ corresponds to some $b_j$,  $1 \leq j \leq k$.

\end{itemize}

For every $G^k$ on $B$, $\delta_1$ is a separator for $\langle I_{F^k},J_{G} \rangle$. 
Thus for every $a_i$ there is a pair $u_{G}^i, v_{G}^i$ in $\{min, max, x, y, z\}$ such that $\langle I_{F^k}^i,J_{G} \rangle$ is separated by $\delta_1(\{u_{G}^i, v_{G}^i\})$. 
The proof of the Lemma follows by considering several cases, as illustrated in Fig. ~\ref{fig:proof_structure}. In each case, it is shown that $\delta$ separates $\langle I,J\rangle$. {\bf Cases 1-5} are \emph{overlapping}, and after considering these cases we show that they cover all the possibilities, i.e. {\bf Case 6} leads to a contradiction.\\
The first case, when $z$ is not needed in $\delta_1$, is the same as the first case in~\cite{grohe_succinc}. {\bf Cases 2-4} show that if there is an $a_i$ in some \emph{nearness relationship} with the interpretation $(\mathcal{A}, \alpha)$ according to $\delta_1$, then $\delta$ is a separator. 
In {\bf Case 5}, we introduce the concept of \emph{gap}, which is the set of elements not covered by the nearness conditions imposed by $\delta_1$. The condition of {\bf Case 5} depends on the \emph{number} of elements
$a_i$ in the gaps. This is the point where the additive term $k$ enters the bound. Finally, in {\bf Case 6} is shown that these cases cover all possibilities, as otherwise there is a set $\{b_1, \ldots, b_k\}$ duplicating $\{a_1, \ldots, a_k\}$ in the sense that no $b_j$ allows  $\delta_1$ to separate.
{\bf Case 6} in particular explains the improvement of our proof over the result of \cite{grohe_succinc} for FO, as a ``$+1$'' is introduced in \cite{grohe_succinc} to deal with what would encompass that case.

\begin{figure*}[h]
    \centering
    \includegraphics[width=0.8\linewidth]{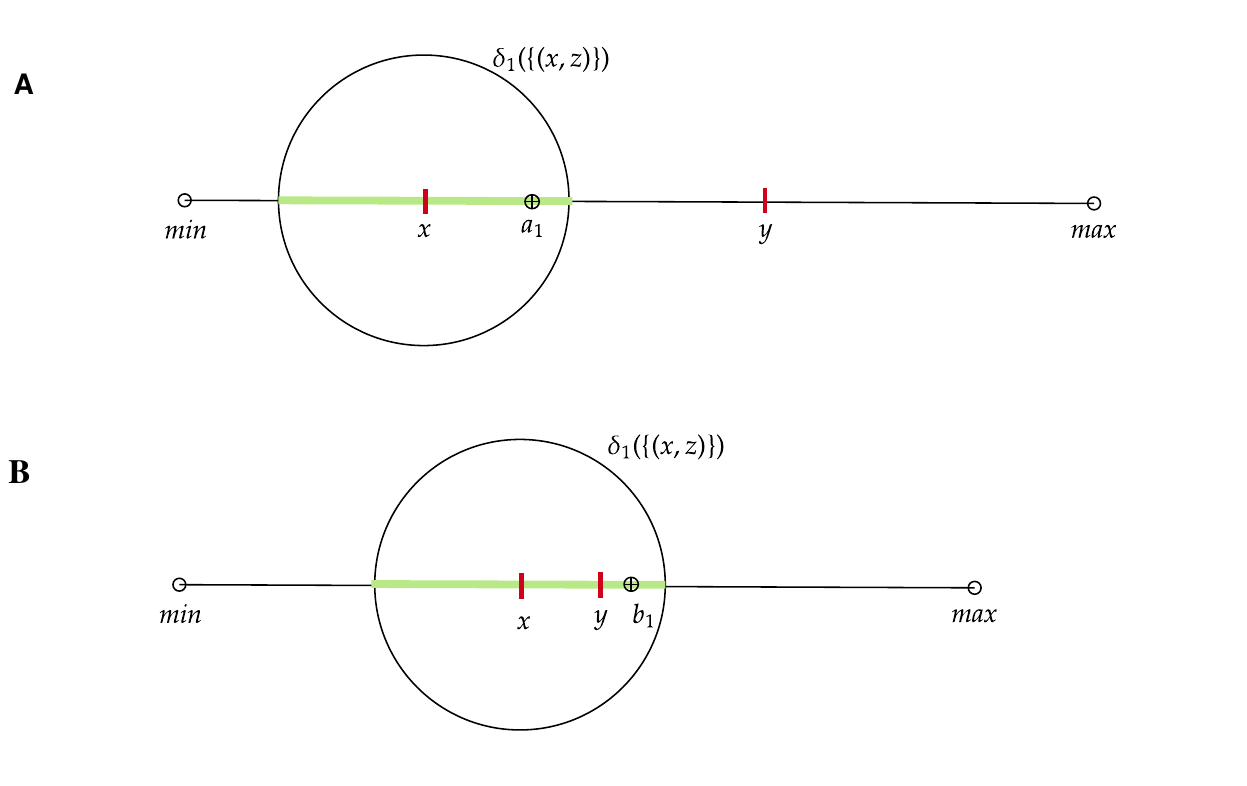}
    \caption{{\bf Case 2} of the proof of Lemma \ref{key:Lemma}.\\
    We associate a ball with center $\alpha(u)$ of radius $\delta_1({u,v})$ for every $u \in \{x,y,min,max\}$ 
on $\mathcal{A}$ (resp. $\beta(u)$ on $\mathcal{B}$). 
Each ball associated to variable $u$ accounts for the ability of the pair $\{u,z\}$ to separate. The crossed dot on $\mathcal{A}$ represents a choice on the linear order from $F^k$, the crossed dot on $\mathcal{B}$ represents ``a best attempt'' at a response for indistinguishability. In {\bf Case 2}, the choice on $\mathcal{A}$ is in a nearness relationship (for $\delta_1(\{x,z\})$ on the figure), and there is no element of $\mathcal{B}$ that matches the separating conditions for indistinguishability (in the figure, d$(a_1, \alpha(x)) =$ d$(b_1, \beta(x)) $ but $<$-type$(a_1) \neq$ $<$-type$(b_1)$).}
    \label{fig:case2}
\end{figure*}

\smallskip

\subsubsection{Rest of the proof}

Note that for all $u \in \{min, max, \\ x, y\}$, and for all $i \in [k]$,  $\alpha(F^k_i(\mathcal{A}, \alpha)/z)(u)=\alpha(u)$. So we will
omit subscripts and just write $\alpha(x)$ for the assignment of $x$ and $min_A$ for the minimum in the $k$ copies of $(\mathcal{A}, \alpha)$ (and similarly with $y,max$). Similarly, for all $u \in \{min, max, x, y\}$, for all $ i \in [k]$, and for all $G^k \in F^k_B$,\\ $\beta(G^k_i(\mathcal{B}, \beta)/z)(u)=\beta(u)$, so we will also write $\beta(x)$ for the assignment of $x$ and $min_B$ for the minimum in the $k$ copies of $(\mathcal{B}, \beta)$ (and similarly with $y,max$). We will say that $\delta(\{u,u'\})$ separates $\langle I,J \rangle$ to signify that the pair of variable $\{u,u'\}$ witnesses the separation property of $\delta$ on $\langle I,J \rangle$.

\smallskip
{\bf Case 1}: There is $G^k$ for which one of the $I^i_{F^k}$ is separated from $J_{G}$ by $\delta_1(\{u, u'\})$ for $\{u,u'\} \in \{min, x,y, max\}$.

\smallskip 

 If there exists $\{u,u'\} \in \{min, x,y, max\}$ such that $\delta_1(\{u, u'\})$ separates $\langle I_{F^k},J_{G} \rangle$ then by definition $\delta(\{u, u'\})$ separates $\langle I,J \rangle$. \qed

\smallskip 
{\bf Assumption 1} In the rest of the proof we will suppose that none of the $I^i_{F^k}$ is separated from $J_{G}$ by $\delta_1(\{u, u'\})$ for $\{u,u'\} \in \{min, x,y, max\}$.\\
Without loss of generality, we may assume that $\alpha(x) \leq \alpha(y)$. If  $\beta(x) > \beta(y)$, $\delta_1(\{x,y\})$ separates $\langle I_{F^k},J_{G} \rangle$ for every $G^k$, which is a contradiction, so $\beta(x) \leq \beta(y)$.

\smallskip

To be able to consider the $k$ choices simultaneously, we are looking to find for each choice $a_i$, a $b_i \in \mathcal{U}^\mathcal{B}$ that $\delta_1$ will not be able to distinguish, in an adversary mindset. An element $a_i$ has a particular location in $(\mathcal{A},\alpha)$ with regards to  $\delta_1(\{z,u\})$ for $u \in \{x,y,min,max\}$, namely that either d$(a_i, \alpha(u)) \leq \delta_1(\{z,u\})$ or d$(a_i, \alpha(u)) >\delta_1(\{z,u\})$. An $a_i$ location is also based on its type ($"=", "<", ">"$) relative to $min_A,\alpha(x), \alpha(y), max_A$, which we will refer to as $<$-type$(a_i)$ ($<$-type$(a_i)$ can be seen as the 4-tuple ($<$-type$(a_i,min_A)$, $<$-type$(a_i,\alpha(x))$, $<$-type$(a_i,\alpha(y))$, $<$-type$(a_i,max_A))$).

\smallskip
{\bf Case 2}: There is an $a_i$ and a $u \in \{x,y,min,max\}$ such that d$(a_i, \alpha(u)) \leq \delta_1(\{z,u\})$ and there is no $b \in \mathcal{U}^{\mathcal{B}}$ that satisfies d$(a_i, \alpha(u)) =$ d$(b, \beta(u)) $ and $<$-type$(a_i) =$ $<$-type$(b)$.

\smallskip

\begin{figure*}[h]
    \centering
    \includegraphics[width=0.8\linewidth]{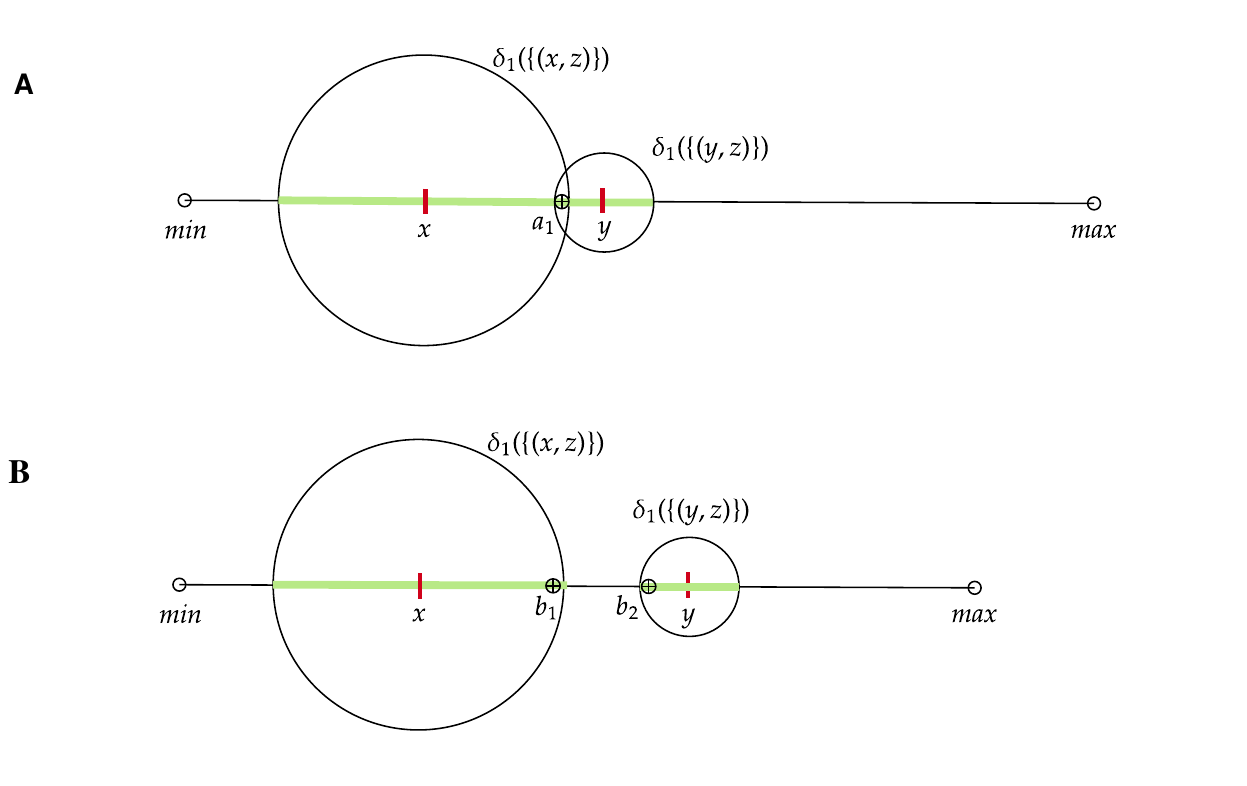}
    \caption{Case 3.}
    \label{fig:case3}
\end{figure*}

We consider two sub-cases to better handle several variables at the same time. {\bf Case 2.2} is represented in Fig. \ref{fig:case2}.

\smallskip
{\bf Case 2.1}: There is an $a_i$ and a $u \in \{x,y,min,max\}$ such that d$(a_i, \alpha(u)) \leq \delta_1(\{z,u\})$ and there is no $b \in \mathcal{U}^{\mathcal{B}}$ such that d$(a_i, \alpha(u)) =$ d$(b, \beta(u)) $ and $<$-type$(a_i,\alpha(u)) = $ $<$-type$(b,\beta(u))$.

\smallskip

We have four cases:
\begin{itemize}
    \item [-]$u$ is $min$: then 
    then $\delta(\{min, max\}) \geq \delta_1(\{min, z\}) \geq $ d$(a_i, min_A)$ $ > $ d$(max_B, min_B)$. Therefore:
    \begin{itemize}
        \item [$\bullet$] d$(max_B, min_B) \neq $ d$(max_A, min_A)$;

        \item [$\bullet$] d$(max_B, min_B) \leq \delta(\{min, max\})$.
    \end{itemize}
    So $\delta(\{min,max\})$ separates $\langle I,J \rangle$.
    \item [-] $u$ is $max$: the above reasoning applies by symmetry.
    \item  [-] $u$ is $x$: If $\alpha(x) < a_i$, 
$\delta(\{x, max\}) \geq \delta_1(\{x, z\}) \geq $ d$(a_i, \alpha(x))$ $ > $ d$(max_B, \beta(x))$. Then:
    \begin{itemize}
        \item [$\bullet$] d$ (max_A, \alpha(x)) \neq$ d$(max_B, \beta(x))$;

        \item [$\bullet$] d$(max_B, \beta(x)) \leq \delta(\{x, max\}) $.
    \end{itemize}
So $\delta(\{x, max\})$ separates $\langle I,J \rangle$. If $\alpha(x) > a_i$, we reach a similar conclusion. Note that $\alpha(x)=a_i$ contradicts the assumption.
    \item  [-] $u$ is $y$: the above reasoning applies by symmetry.\qed
\end{itemize}
\smallskip
{\bf Case 2.2}: There is an $a_i$ and a $u \in \{x,y,min,max\}$ such that d$(a_i, \alpha(u)) \leq \delta_1(\{z,u\})$ and there is no $b \in \mathcal{U}^{\mathcal{B}}$ such that d$(a_i, \alpha(u)) =$ d$(b, \beta(u)) $ and $<$-type$(a_i) =$ $<$-type$(b)$.

\smallskip

We assume we are not in {\bf Case 2.1}, so there exists a $b \in \mathcal{U}^{\mathcal{B}}$ verifying d$(a_i, \alpha(u)) =$ d$(b, \beta(u)) $ and $<$-type$(a_i,\alpha(u)) = $ $<$-type$(b,\beta(u))$, note that such a $b$ is unique, as we are working on linear orders. We fix $b$ and call $u'$ the variable in $\{x,y,min,max\}$ such that $<$-type$(a_i,\alpha(u')) \neq $ $<$-type$(b,\beta(u'))$ by hypothesis of {\bf Case 2.2}.
\noindent
 The type is different, so it must be that both:
 \begin{itemize}
 \item d$(\alpha(u'), \alpha(u)) \neq $ d$(\beta(u'), \beta(u))$;
\item  MIN$[$d$(\alpha(u'), \alpha(u)),$ d$(\beta(u'), \beta(u)) ]\leq $ d$(a_i, \alpha(u))$.
\end{itemize}
Since d$(a_i, \alpha(u)) \leq \delta_1(\{z,u\})$ $ \leq \delta(\{u',u\})$ we get:
    \begin{itemize}
        \item d$(\alpha(u'), \alpha(u)) \neq $ d$(\beta(u'), \beta(u))$;
        \item MIN$[$d$(\alpha(u'), \alpha(u)),$ d$(\beta(u'), \beta(u)) ] \leq \delta(\{u',u\}) $.
    \end{itemize}
So $\delta(\{u',u\})$ separates$\langle I,J \rangle$.\qed

\smallskip
{\bf Assumption 2} In the rest of the proof, we will suppose that if there is an $a_i$ and a $u\in \{x,y,min,max\}$ such that d$(a_i, \alpha(u)) \leq \delta_1(\{z,u\})$, then there is a $b \in \mathcal{U}^{\mathcal{B}}$ such that d$(a_i, \alpha(u)) =$ d$(b, \beta(u)) $ and $<$-type$(a_i) =$ $<$-type$(b)$.

\smallskip

We define $S_i :=\{ u \in \{x,y,min,max\} \, : \, $d$(a_i, \alpha(u)) \leq \delta_1(\{z,u\})\}$, the set of variables to which $a_i$ is ``near''.

\smallskip

{\bf Case 3}: There is an $a_i$ such that $|S_i| \geq 2$ and there is no $b \in \mathcal{U}^{\mathcal{B}}$ verifying d$(a_i, \alpha(u)) =$ d$(b, \beta(u)) $ for every $u \in S_i$.

We give a visual representation of this case in Fig. \ref{fig:case3} in Appendix \ref{appendix:figures}.
\smallskip

   Suppose there is no such $b$, we call $u,u'$ two elements of $S_i$. Then {\bf Assumption 2} implies that d$(\alpha(u), \alpha(u')) \neq $ d$(\beta(u), \beta(u'))$. Notice that d$(\alpha(u), \alpha(u')) \leq $ d$(a_i, \alpha(u)) + $\\ d$(a_i, \alpha(u'))$ $  \leq \delta_1(\{z,u\}) + \delta_1(\{z,u'\}) \leq \delta(\{u,u'\})$. So:
   \begin{itemize}
       \item  d$(\alpha(u), \alpha(u')) \neq $ d$(\beta(u), \beta(u'))$;
       \item d$(\alpha(u), \alpha(u')) \leq \delta(\{u,u'\})$.
   \end{itemize}
   Therefore $\delta(\{u,u'\})$ separates $\langle I,J \rangle$.\qed
   
\smallskip
{\bf Assumption 3} In the rest of the proof, we will suppose that if there is an $a_i$ such that $|S_i| \geq 2$, then there is $b \in \mathcal{U}^{\mathcal{B}}$ such that d$(a_i, \alpha(u)) =$ d$(b, \beta(u)) $ for all $u \in S_i$.

\smallskip
{\bf Case 4}: There is an $a_i$ and there are $u,u' \in \{x,y,min,max\}$ such that d$(a_i, \alpha(u)) > \delta_1(\{z,u\})$ and d$(a_i, \alpha(u')) \leq \delta_1(\{z,u'\})$ and there is no $b \in \mathcal{U}^{\mathcal{B}}$ that satisfies d$(a_i, \alpha(u'))=$ d$(b, \beta(u'))$,  $<$-type$(a_i) =$ $<$-type$(b)$ and d$(b, \beta(u)) > \delta_1(\{z,u\})$.

We give a visual representation of this case in Fig. \ref{fig:case4}.

\begin{figure*}[h]
    \centering
    \includegraphics[width=0.8\linewidth]{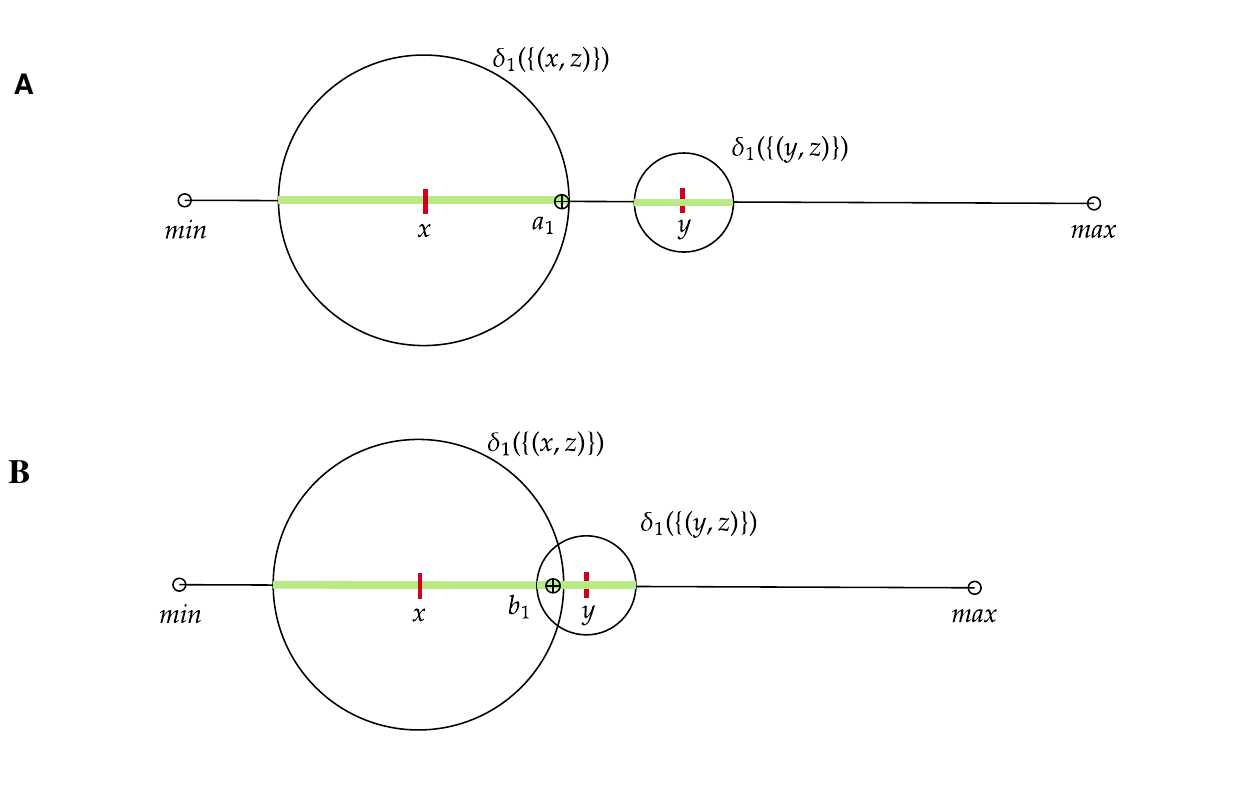}
    \caption{{\bf Case 4}  of the proof of Lemma \ref{key:Lemma}.\\
    We associate a ball with center $\alpha(u)$ of radius $\delta_1({u,v})$ for every $u \in \{x,y,min,max\}$ on $\mathcal{A}$ (resp. $\beta(u)$ on $\mathcal{B}$). 
Each ball associated to variable $u$ accounts for the ability of the pair $\{u,z\}$ to separate. The crossed dot on $\mathcal{A}$ represents the choices on the linear order from $F^k$, the crossed dot on $\mathcal{B}$ represent the best attempt at a response for indistinguishability. In {\bf Case 4}, a choice on $\mathcal{A}$ is in one ball and not in another, which cannot be matched by an element of $\mathcal{B}$.}
    \label{fig:case4}
\end{figure*}

\smallskip

Assume such an $a_i$ exists, we set $u,u' \in \{x,y,min,max\}$ and $b \in \mathcal{U}^{\mathcal{B}}$ such that d$(a_i, \alpha(u'))=$ d$(b, \beta(u'))$,  $<$-type$(a_i) =$ $<$-type$(b)$, note that such a $b$ exists by {\bf Assumption 2}.\\
By hypothesis d$(b, \beta(u)) <\text{d}(a_i, \alpha(u))$, so we must have that d$( \beta(u'), \beta(u)) \neq$ d$( \alpha(u'), \alpha(u))$. Notice that d$( \beta(u'), \beta(u)) \leq $ d$(b, \beta(u)) + $d$(b, \beta(u')) \leq \delta_1(\{z,u\}) + \delta_1(\{z,u'\}) \leq \delta(\{u,u'\}) $. So:

\begin{itemize}
    \item d$( \alpha(u'), \alpha(u)) \neq$ d$( \beta(u'), \beta(u))$;
    \item d$( \beta(u'), \beta(u)) \leq \delta(\{u,u'\}) $.
\end{itemize}

So $\delta (\{u,u'\})$ separates $\langle I,J \rangle$. \qed

\smallskip
{\bf Assumption 4} In the rest of the proof, we will suppose that if there is an $a_i$ and $u,u' \in \{x,y,min,max\}$ such that d$(a_i, \alpha(u)) > \delta_1(\{z,u\})$ and d$(a_i, \alpha(u')) \leq \delta_1(\{z,u'\})$ then there is $b \in \mathcal{U}^{\mathcal{B}}$ such that d$(a_i, \alpha(u'))=$ d$(b, \beta(u'))$ and d$(b, \beta(u)) > \delta_1(\{z,u\})$.

\smallskip

Before proceeding to the next case, we define quantities, called gaps, to characterize the elements of a linear order not in any nearness relationship to any of the assignments. 

\begin{definition}[Gap]
    \label{definition:gap}
For $(\mathcal{U}, \gamma) \in \{(\mathcal{A}, \alpha),(\mathcal{B}, \beta)\}$, we derive three \footnote{ The three cases are analogous, but a unified formal definition appears to be less readable.} particular quantities called \emph{gaps}:

\noindent
\begin{multline*}
Gap_{(\mathcal{U}, \gamma)}(\{min,x\}) = \{ a \in [min_U, \gamma(x)] |\\
\; \forall u \in \{x,y, min, max\}, 
\text{d}(a, \gamma(u)) > \delta_1 (\{u,z\}) \}
\end{multline*}

\noindent
\begin{multline*}
Gap_{(\mathcal{U}, \gamma)}(\{y,max\}) = \{ a \in [\gamma(y), max_U] | \\  \; \forall u \in \{x,y, min, max\}, 
\text{d}(a, \gamma(u)) > \delta_1 (\{u,z\}) \}
\end{multline*}

\noindent
\begin{multline*}
Gap_{(\mathcal{U}, \gamma)}(\{x,y\}) = \{ a \in [\gamma(x), \gamma(y)] | \\  \; \forall u \in \{x,y, min, max\}, 
\text{d}(a, \gamma(u)) > \delta_1 (\{u,z\}) \}
\end{multline*}
\end{definition}

\begin{figure*}[h]
    \centering
    \includegraphics[width=0.8\linewidth]{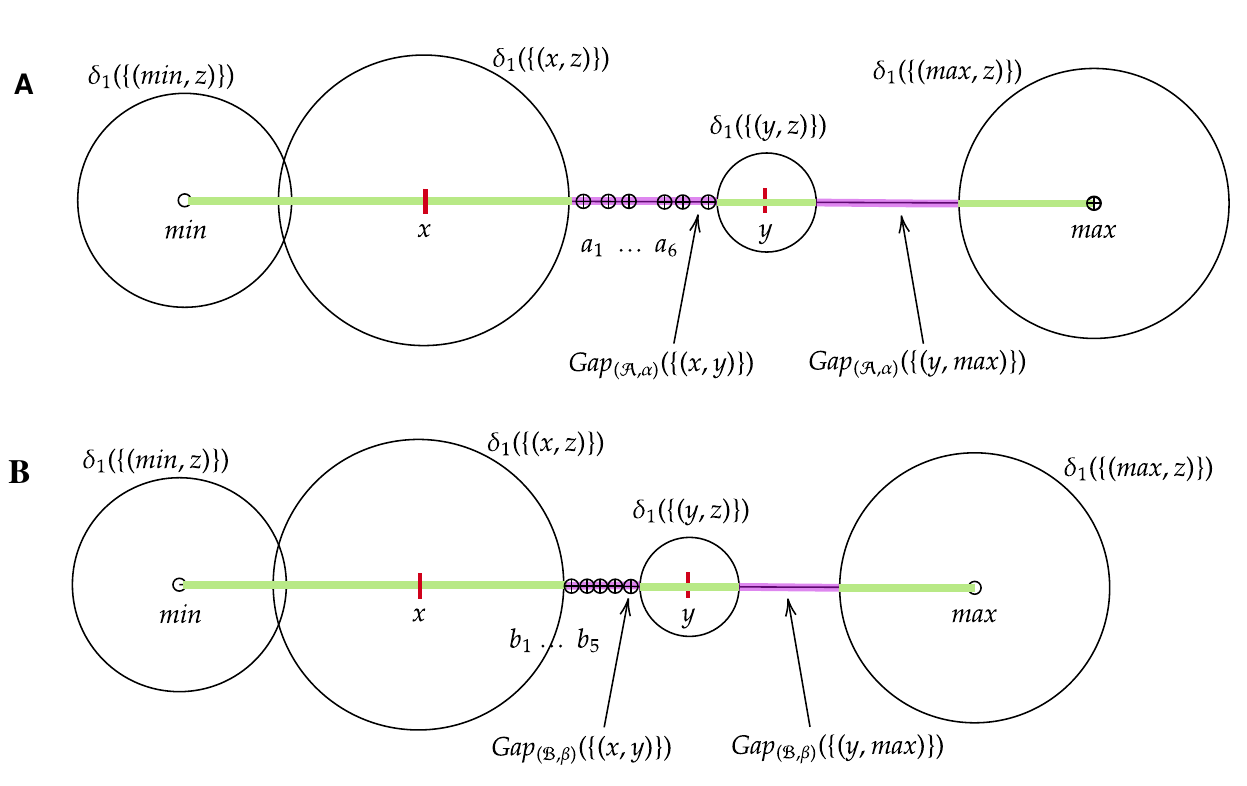}
    \caption{{\bf Case 5} of the proof of Lemma \ref{key:Lemma}.\\
    We associate a ball with center $\alpha(u)$ of radius $\delta_1({u,v})$ for every $u \in \{x,y,min,max\}$ on $\mathcal{A}$ (resp. $\beta(u)$ on $\mathcal{B}$). 
Each ball associated to variable $u$ accounts for the ability of the pair $\{u,z\}$ to separate. The regions of the linear orders that are not covered by any of the balls are the gaps, in purple on the Figure (in the example, the $\{min,x\}$ gap is empty. The crossed dots on $\mathcal{A}$ represents the choices on the linear order from $F^k$, the crossed dots on $\mathcal{B}$ represent the best attempts at a response for indistinguishability. In {\bf Case 5}, $p$ choices on $\mathcal{A}$ are in a gap, which cannot be matched by $p$ elements of $\mathcal{B}$ from the corresponding gap.}
    \label{fig:case5}
\end{figure*}

The motivation for gaps is that a separator cannot distinguish between two elements located within the same gap. We will refer to as ``gap variables" any two variables among $\{x,y,min,max\}$ that bound a given gap \footnote{A note on terminology: here $min$ and $max$ are seen as arguments of a separator function.}. More formally:

\begin{definition}[Gap variables]
\label{definition:gap_variables}
    For an interpretation $(\mathcal{U}, \gamma)$ and a pair of variables $(u,u') \in \{(min,x), (x,y), (y,max)\}$, the gap variables $(v,v')$ of $Gap_{(\mathcal{U}, \gamma)}(\{u,u'\})$ are defined as:
    \begin{itemize}
        \item $(u,u') = (min,x)$: $(v,v')$ is a pair of variables in\\
        $\{(min,x),(min,y),(min,max) \}$ such that
        \noindent
        \begin{multline*}
        |Gap_{(\mathcal{U}, \gamma)}(\{min,x\})| +  \delta_1(\{z, v\}) + \delta_1(\{z, x\}) \geq \\ \text{d}(\gamma(v'),\gamma(v))
        \end{multline*} 
        \item $(u,u') = (y,max)$: $(v,v')$ is a pair of variables in\\
        $\{(y, max),(x,max),(min,max) \}$ such that
        \begin{multline*}
            |Gap_{(\mathcal{U}, \gamma)}(\{y,max\})| +  \delta_1(\{z, v\}) + \delta_1(\{z, x\}) \geq \\ \text{d}(\gamma(v'),\gamma(v))
        \end{multline*}

        \item $(u,u') = (x,y)$: $(v,v')$ is a pair of variables in\\
        $\{(x,y),(min,y),(x,max),(min,max) \}$ such that
        \begin{multline*}
            |Gap_{(\mathcal{U}, \gamma)}(\{x,y\})| +  \delta_1(\{z, v\}) + \delta_1(\{z, x\}) \geq \\ \text{d}(\gamma(v'),\gamma(v))
        \end{multline*}

    \end{itemize}

\end{definition}
In the definition, the $(v,v')$ are ``bounding'' the gap $(u,u')$ on $U$. It is clear that such pairs $(v,v')$ exist by the definition of gaps, and if multiple pairs verify one of the inequalities, we pick the smallest pair in lexicographic order to be the gap variables.
We give a visual representation of the definitions in Fig. \ref{fig:gap_var} in Appendix \ref{appendix:gap_var}.

A gap $\{u,u'\}$ needs not have gap variables $\{u,u'\}$.  For example, the gap $\{min,x\}$ might have gap variables $\{min,y\}$ when the $\delta_1({y,v})$ radius circle contains the $\delta_1({x,v})$ radius circle, hence covering a bigger chunk of the $\{min,x\}$ interval.
If the gap is nonempty, the inequalities in Definition \ref{definition:gap_variables} become equalities, as the two circles corresponding to the gap variables are non-overlapping, and so the two radii added to the length of the gap add up to the length of the interval.

We derive a couple results for gap variables before proceeding to the next case:

\begin{proposition}
\label{proposition:var}
     If there are $u,u' \in \{x,y,min,max\}$ such that MIN $[$ d$(\alpha(u), \alpha(u')), $d$(\beta(u), \beta(u'))]  \leq \delta_1(\{z,u\})$ then d$(\alpha(u), \alpha(u')) = $d$(\beta(u), \beta(u'))$.
\end{proposition}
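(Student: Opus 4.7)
The plan is to prove Proposition~\ref{proposition:var} by contradiction, combining Assumption~1 applied to the pair $\{u,u'\}$ with the bound appearing in the hypothesis. Without loss of generality assume $\text{d}(\alpha(u),\alpha(u')) \leq \text{d}(\beta(u),\beta(u'))$, so that the hypothesis reads $\text{d}(\alpha(u),\alpha(u')) \leq \delta_1(\{z,u\})$. Suppose, toward a contradiction, that $\text{d}(\alpha(u),\alpha(u')) \neq \text{d}(\beta(u),\beta(u'))$.

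The first step is to apply Assumption~1 to the pair $\{u,u'\} \subseteq \{x,y,min,max\}$. Since neither $u$ nor $u'$ equals $z$, the values taken by these variables are the same in every $I^i_{F^k}$ and every $J_G$ as in $I$ and $J$ respectively, so Assumption~1 translates directly into the failure of the separator condition for $\delta_1(\{u,u'\})$ on $\langle I,J\rangle$. Unpacking this failure gives that the $<$-types agree and that either the distances are equal or $\text{MIN}[\text{d}(\alpha(u),\alpha(u')),\text{d}(\beta(u),\beta(u'))] > \delta_1(\{u,u'\})$. Since we have assumed the distances differ, the second disjunct forces $\delta_1(\{u,u'\}) < \text{d}(\alpha(u),\alpha(u'))$, which combined with the hypothesis produces the sandwich
\[
\delta_1(\{u,u'\}) \;<\; \text{d}(\alpha(u),\alpha(u')) \;\leq\; \delta_1(\{z,u\}).
\]

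The second step derives a contradiction by exhibiting a ``mirror'' element $b^* \in \mathcal{U}^{\mathcal{B}}$ at the same $<$-type relative to $\beta(u)$ as $\alpha(u')$ has relative to $\alpha(u)$, and at distance $\text{d}(\alpha(u),\alpha(u'))$ from $\beta(u)$; since the distances differ, $b^* \neq \beta(u')$. Because $\alpha(u')$ itself lies in the $\delta_1(\{z,u\})$-ball around $\alpha(u)$, and because Assumption~2 (together with the tighter matching conditions of Assumptions~3 and~4) imposes very rigid exact-match constraints between each $a_i$ lying in that ball and a corresponding $b \in \mathcal{U}^{\mathcal{B}}$, the existence of $b^*$ together with the requirement that some $G^k$ contains $b^*$ among its $k$ candidate values should force the separator $\delta_1$ to fail on the pair $(I^i_{F^k}, J_G)$ for Spoiler's forced selection $G$. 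A case split on whether $a_i$ is near $\alpha(u)$ alone (Assumption~2), near both $\alpha(u)$ and some other variable (Assumption~3), or near $\alpha(u')$ but far from some other variable (Assumption~4), should then rule out every potential separating pair $\{z,v\}$, yielding the desired contradiction with $\delta_1$ being a separator for $il(v_1)$.

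The main obstacle will be Step~2: the $k$-choice function $F^k$ is fixed adversarially and need not contain $\alpha(u')$ in its image, so Assumption~2 cannot be applied with $a_i = \alpha(u')$ directly. The proof must instead extract an obstruction purely from the structural fact that $\alpha(u')$ lies within the $\delta_1(\{z,u\})$-ball and that the image of $F^k$ consists of $k$ distinct values, and must do so in a way compatible with the extra slack $k-1$ built into the definition of $\delta$ in Lemma~\ref{key:Lemma}. Threading this needle is the technical heart of the proposition, and sets up its use in the Case~5 gap-variable analysis, where it guarantees that gaps in $\mathcal{A}$ and $\mathcal{B}$ remain aligned in length whenever the relevant variable distances are controlled by $\delta_1(\{z,u\})$.
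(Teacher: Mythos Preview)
Your proposal has a genuine gap and misses the paper's much simpler argument. The paper's proof is a two-line observation: by the definition of $\delta$ in Lemma~\ref{key:Lemma} (with $z$ as the quantified variable), for any $u,u' \in \{x,y,min,max\}$ one has
\[
\delta(\{u,u'\}) \;\geq\; \delta_1(\{u,z\}) + \delta_1(\{z,u'\}) + (k-1) \;\geq\; \delta_1(\{z,u\}).
\]
Hence if the distances differ, the hypothesis $\text{MIN}[\text{d}(\alpha(u),\alpha(u')),\text{d}(\beta(u),\beta(u'))] \leq \delta_1(\{z,u\}) \leq \delta(\{u,u'\})$ immediately gives that $\delta(\{u,u'\})$ separates $\langle I,J\rangle$. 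That is the goal of Lemma~\ref{key:Lemma}, so the ``contradiction'' is simply that we are working under the standing assumption (Cases~5--6) that $\delta$ has not yet been shown to separate.

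You instead try to contradict the fact that $\delta_1$ is a separator for $il(v_1)$, which forces you to produce a specific $G^k$ and show that no pair $\{z,v\}$ separates $\langle I^i_{F^k}, J_G\rangle$. As you yourself note, Step~2 cannot be carried out: $\alpha(u')$ need not be among the $a_i$, so Assumptions~2--4 give you no leverage on it, and there is no way to manufacture the required $b^*$ argument from the available data. The entire second step is a plan, not a proof, and the obstacle you identify is real. The fix is not to ``thread the needle'' through $F^k$; it is to abandon the attempt to contradict $\delta_1$ and instead observe that the failure of the conclusion already makes $\delta$ succeed on $\langle I,J\rangle$ directly.
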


\begin{proof}
    Suppose not, since $\delta_1(\{z,u\}) \leq \delta(\{u,u'\})$ we have:
    \begin{itemize}
        \item d$(\alpha(u), \alpha(u')) \neq $d$(\beta(u), \beta(u'))$,
        \item MIN $[$ d$(\alpha(u), \alpha(u')), $d$(\beta(u), \beta(u'))]  \leq \delta(\{u,u'\})$.
    \end{itemize}
So $\delta (\{u,u'\})$ separates $\langle I,J \rangle$.

\end{proof}

\begin{proposition}
    Let $v,v'$ are the gap variables associated with $Gap_{(\mathcal{A}, \alpha)}(\{u,u'\})$ if and only if $v,v'$ are the gap variables associated with $Gap_{(\mathcal{B}, \beta)}(\{u,u'\})$.
\end{proposition}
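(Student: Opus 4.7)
The plan is to show that the selection inequality in Definition \ref{definition:gap_variables} evaluates identically on $(\mathcal{A},\alpha)$ and $(\mathcal{B},\beta)$, so that the same lexicographically smallest pair is chosen on both sides. The inequality combines three quantities: the $\delta_1$-radii (which depend only on the child $v_1$ and hence are common to both interpretations), the inter-variable distance $\text{d}(\gamma(v),\gamma(v'))$ for candidate pairs $(v,v')$ drawn from $\{x,y,min,max\}$, and the gap cardinality $|Gap_{(\mathcal{U},\gamma)}(\{u,u'\})|$. It therefore suffices to establish that these last two quantities agree between $\alpha$ and $\beta$ for every candidate pair that appears in Definition \ref{definition:gap_variables}.

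For the inter-variable distances, I would invoke Proposition \ref{proposition:var} pairwise. Whenever $\text{d}(\alpha(v),\alpha(v')) \le \delta_1(\{z,v\})$ (or the symmetric bound on the $\beta$-side), the proposition forces $\text{d}(\alpha(v),\alpha(v')) = \text{d}(\beta(v),\beta(v'))$. For pairs whose distance exceeds every relevant $\delta_1$-radius, any residual disagreement cannot affect the selection inequality, since on both sides the two balls of radii $\delta_1(\{z,v\})$ and $\delta_1(\{z,v'\})$ are disjoint and their contribution to covering $[\gamma(u),\gamma(u')]$ is fixed.

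For the gap cardinalities, I would argue by a direct covering computation: $|Gap_{(\mathcal{U},\gamma)}(\{u,u'\})|$ equals the length of $[\gamma(u),\gamma(u')]$ minus the measure of its intersection with the union of the four balls of radius $\delta_1(\{z,w\})$ centred at $\gamma(w)$ for $w \in \{x,y,min,max\}$. Each term in this sum depends only on pairwise distances inside $\{x,y,min,max\}$ and on the common radii; by the preceding paragraph all such distances either transfer exactly from $\alpha$ to $\beta$ or are too large to contribute anything other than the full radius. Hence the two gap cardinalities coincide, and the \emph{if and only if} follows from the identity of all parameters in Definition \ref{definition:gap_variables} together with the fixed lexicographic tiebreak.

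The main obstacle is controlling candidate pairs $(v,v')$ whose $\alpha$-distance and $\beta$-distance straddle the threshold $\delta_1(\{z,v\})$---one side small, the other side large. Proposition \ref{proposition:var} rules this out exactly through its \textup{MIN} hypothesis, so careful bookkeeping of when the proposition applies to each of the (few) candidate pairs listed in Definition \ref{definition:gap_variables} is the essential step. Once this case analysis is tabulated, the remainder is the elementary covering computation sketched above, after which the equivalence is immediate.
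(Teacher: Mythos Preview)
Your strategy breaks at the third ingredient: the two gap cardinalities need \emph{not} coincide. The proposition sits in the proof of Lemma~\ref{key:Lemma} after Assumptions~1--4 but \emph{before} Case~5, and Case~5 invokes it precisely in a situation where $|Gap_{(\mathcal{A},\alpha)}(\{u,u'\})| \ge p > |Gap_{(\mathcal{B},\beta)}(\{u,u'\})|$. Your covering computation would force the two gap sizes to be equal because it treats the interval length $\text{d}(\gamma(u),\gamma(u'))$ as transferable; but that distance can be large on both sides (so Proposition~\ref{proposition:var} does not apply) while still taking different values, and then the gaps differ by exactly that amount. Consequently the selection inequality of Definition~\ref{definition:gap_variables} does \emph{not} evaluate identically on the two sides, and the ``equality of all parameters'' route collapses.

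The paper avoids this entirely. Instead of comparing the two inequalities term by term, it compares candidate bounding variables directly: if on $(\mathcal{A},\alpha)$ the ball around $\alpha(v)$ covers at least as far into the interval as the ball around $\alpha(w)$, then $\text{d}(\alpha(v),\alpha(w)) \le \delta_1(\{z,v\}) - \delta_1(\{z,w\}) \le \delta_1(\{z,v\})$. That distance is small enough for Proposition~\ref{proposition:var} to transfer it to $(\mathcal{B},\beta)$, so the same domination of balls holds there and $w$ cannot be the bounding variable on either side. This argument never refers to $|Gap|$ or to the long distance $\text{d}(\gamma(u),\gamma(u'))$, which is why it survives in the setting of Case~5.
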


\begin{proof}
    Let $v,v'$ be the gap variables associated with\\
 $Gap_{(\mathcal{A}, \alpha)}(\{u,u'\})$, and suppose $w,w'$ are the gap variables associated with  $Gap_{(\mathcal{B}, \beta)}(\{u,u'\})$.\\
 We will first reason from $(\mathcal{A}, \alpha)$ to $(\mathcal{B}, \beta)$, and suppose that $\alpha(v) < \alpha(w) \leq \alpha(u)$.
 By the definition of the gap on $(\mathcal{A}, \alpha)$, it must be that $\delta_1(\{z,v\}) \geq \text{d} (\alpha(w),\alpha(v)) + \delta_1(\{z,w\})$. Then proposition \ref{proposition:var} implies that d$(\alpha(w),\alpha(u)) = \text{d} (\beta(w),\beta(u))$. So $\delta_1(\{z,v\}) \geq \text{d} (\beta(w),\beta(v)) + \delta_1(\{z,w\})$, therefore $w$ cannot be a gap variable for $Gap_{(\mathcal{B}, \beta)}(\{u,u'\})$.\\
 The same reasoning applies from the gap on $(\mathcal{B}, \beta)$ to the gap on $(\mathcal{A}, \alpha)$, yielding the result.

\end{proof}

\smallskip

{\bf Case 5}: There is $(u,u') \in \{(min,x), (x,y), (y,max)\}$ such that there are $p$ choices of $F^k$ in $Gap_{(\mathcal{A}, \alpha)}$ and \\ $|Gap_{(\mathcal{B}, \beta)}(\{u,u'\})| < p $.

We give a visual representation of this case in Fig. \ref{fig:case5}.
\smallskip

By proposition \ref{proposition:var}, there is a unique pair $v,v'$ of gap variables for both $Gap_{(\mathcal{A}, \alpha)}(\{u,u'\})$ and $Gap_{(\mathcal{B}, \beta)}(\{u,u'\})$.
On one hand by the gap variables definition on $(\mathcal{B},\beta)$, we have d$(\beta(v'), \beta(v)) \leq$ $ \delta_1(\{z,v\}) +  \delta_1(\{z, v'\}) + |Gap_{(\mathcal{B}, \beta)}(\{u,u'\})| $ $\leq \delta_1(\{z,v\}) +  \delta_1(\{z, v'\}) + (k-1) \leq \delta (\{v,v'\})$.\\
On the other hand by the gap variables definition on $(\mathcal{A},\alpha)$, d$(\alpha(v'), \alpha(v)) =\delta_1(\{z,v\}) +  \delta_1(\{z, v'\}) + |Gap_{(\mathcal{A}, \alpha)}(\{u,u'\})| >$   d$(\beta(v'), \beta(v))$. Finally $\delta(\{v,v'\})$ separates $\langle I,J \rangle$.

\begin{figure*}[h]
  \centering
  \includegraphics[width=0.8\linewidth]{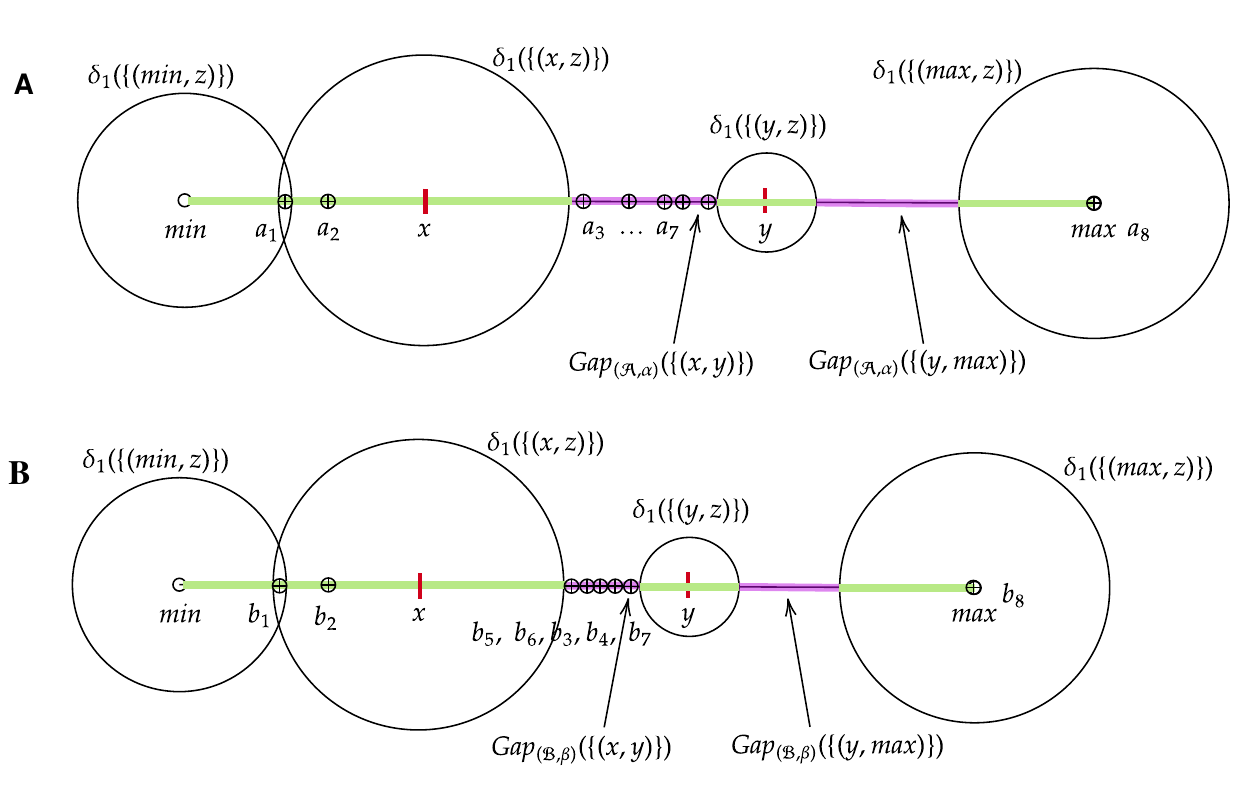}

\caption{ $<I_{F^k}, J_{G}>$ cannot be separated by $\delta_1$ in {\bf Case 6} of the proof of Lemma \ref{key:Lemma}.\\
We associate a ball with center $\alpha(u)$ of radius $\delta_1({u,v})$ for every $u \in \{x,y,min,max\}$ 
on $\mathcal{A}$ (resp. $\beta(u)$ on $\mathcal{B}$). 
Each ball associated to variable $u$ accounts for the ability of the pair $\{u,z\}$ to separate. The regions of the linear orders that are not covered by any of the balls are the gaps, in purple on the Figure (in the example, the $\{min,x\}$ gap is empty. The crossed dots on $\mathcal{A}$ represents the choices on the linear order from $F^k$, the crossed dots on $\mathcal{B}$ represent the best attempts at a response for indistinguishability. In {\bf Case 6}, every choice on $\mathcal{A}$ can be matched by a different $\delta_1$-indistinguishable element of $\mathcal{B}$.
}
\label{figure:proof_drawing}
\end{figure*}

\smallskip
{\bf Assumption 5} In the rest of the proof, we assume that if there are $p$ choices of $F^k$ in $Gap_{(\mathcal{A}, \alpha)}(\{u,u'\})$ for $(u,u') \in \{(min,x), (x,y), (y,max)\}$ then $|Gap_{(\mathcal{B}, \beta)}(\{u,u'\})| \geq p $.

\smallskip

{\bf Case 6}: Remaining cases.
Note that $S_i = \emptyset$  if and only if $a_i$ is in a gap. Recall the assumptions so far in the proof:

\begin{itemize}
    \item From {\bf Assumption 1}: for every $G^k$, $\langle I_{F^k},J_{G} \rangle$ can only be distinguished by $\delta_1(z,u)$ for $u \in \{x,y,min,max\}$;
    \item From {\bf Assumptions 2 to 4}: for every $a_i$ such that $S_i \neq \emptyset$, we have that there exists a unique $b_i \in \mathcal{U}^{\mathcal{B}}$ such that $<$-type$(a_i) = $ $<$-type$(b_i)$, d$(b_i, \beta(u)) = $ d$(a_i, \alpha(u))$ for all $u \in S_i$ and d$(b_i, \beta(u)) > \delta_1(\{z,u\})$ for all $u \in \{x,y,min, max\} \setminus S_i$.
    \item From {\bf Assumption 5}: for every $a_i$ in $Gap_{(\mathcal{A}, \alpha)}(\{u,u'\})$ for $(u,u') \in \{(min,x),(x,y),$ $(y,max)\}$, there exists a distinct $b_i \in Gap_{(\mathcal{B}, \beta)}(\{u,u'\})$.
\end{itemize}

Consider the choice function $G^k$ consisting of the $k$ $b_i$'s matching the $k$ $a_i$'s as described above. Suppose index $j$ from the choice function $G^k$ is selected to create $J_{G}$. Then by construction $\delta_1(\{z, u\})$ does not distinguish $\langle I_{F^k_j},J_{G} \rangle$ for any $u \in \{x,y,min,max\}$, which contradicts {\bf Assumption 1}. The situation is represented in Fig. \ref{figure:proof_drawing}.
In conclusion, the $\delta$ defined separates $\langle I,J \rangle$.
\end{proof}

\section{Conclusion and perspectives} \label{sec:conc}

In this paper an EF
game is formulated for formula size for counting logic. It is used to prove a $\sqrt{n}/t$ lower bound for the size of 3-variable counting logic formulae with counting rank $t$, distinguishing a linear order of size $n$ from a larger one. 
The lower bound extends a $\Omega(\sqrt{n})$ lower bound of ~\cite{grohe_succinc}.
The proof is based on the approach of~\cite{grohe_succinc},
with a different argument for the central case of handling counting quantifiers.
Closing the gap between the lower bound and the upper bound of size $O(n/t)$ is an open problem. This is open even in the FO case where, as far as we know, no improvement is known of the linear upper bound.

The lower bound has some implications for comparing formula sizes of various fragments of counting logic.
Comparing the succinctness of various knowledge representation formalisms is studied in detail in knowledge compilation~\cite{Darwiche02}, Boolean complexity theory~\cite{Jukna7} and other areas, but, as noted in~\cite{grohe_succinc}, perhaps less so for predicate logic.  Bounds comparing $m$-variable counting logic formula sizes for $m = 2, 3, 4$ will be included in the final version of this paper. 
Separating the expressivity of the 2- and 3-variable fragments is an open problem, and it seems to be open for the FO case as well. 

We conclude with a brief description of the connection between \gnn and counting logic formula size, which is one of the motivations of this work and the subject of ongoing work. 
A \gnn works on a graph with feature vectors assigned to the nodes. In each round these are updated by applying a \textit{combination} function to the previous vector and an \textit{aggregate} of the feature vectors of the neighbors~\cite{Hamilton20}. A \textit{logical classifier} computes a unary query on graphs (e.g., assigning to every graph the set of red vertices with all blue neighbors).
Barcel\'o et al.~\cite{Barcelo20} showed that an FO logical classifier is computable by a \gnn iff it is definable in 
2-variable guarded counting logic. See also~\cite{logic_gnn,Grohe23}, also noting that this is a ``uniform'' model. Every such formula has a \gnn simulation with complexity (number of features and rounds) depending on the complexity of the formula.

In ongoing work with coauthors we study GNN learning a query corresponding to a formula. One question we consider is: to what extent can the model learned on one class of graphs be transferred to some other class of graphs? How does performance depend on the formula and the graph classes? 
The answer seems to be related to the connection between counting logic and the Weisfeiler-Leman algorithm mentioned in the introduction, and formula complexity may play a role here.
Another related question, recently considered in~\cite{Plusk}, is whether the underlying formula be extracted from a learned \gnn using explainability techniques?

\bibliographystyle{acm}
\bibliography{references}

\clearpage

\appendices

\section{Proof of Lemma \ref{lemma:separator} parts 1 and 2}
\label{appendix:lemma_parts_1_2}
\subsection{Part 1}

Let $v$ be a leaf of $T$, by definition \ref{definition:syntax_tree}, $v$ if of the form $T_{\psi}^{\langle C,D \rangle}$ where $\psi$ is an atomic formula, say $R(u,u')$ where $u,u' \in \{x,y,min,max\}$ and $R \in \{<,=,succ\}$. Since $ C,D \models R(u,u')$, it is easy to verify that for all $(\mathcal{A}, \alpha) \in C $ and $(\mathcal{B}, \beta) \in D$:
\begin{enumerate}
    \item $<$-type$(\alpha(u), \alpha(u')) \neq $$<$-type$(\beta(u), \beta(u'))$  or
    \item both:
    \begin{itemize}
        \item MIN$[\text{d}(\alpha(u), \alpha(u' )) , \text{d}(\beta(u), \beta(u'))] \leq 1$ and,
        \item $\text{d}(\alpha(u), \alpha(u' )) \neq \text{d}(\beta(u), \beta(u' ))$.
\end{itemize}
\end{enumerate}
First if $u\neq u'$, consider the separator of $(C,D)$, $\delta_1$, defined as:
\begin{align*}
    \delta_1 = \begin{cases}
        1 \; \text{ if } p = \{u,u'\}\\
        0 \; \text{ otherwise.}
    \end{cases}
\end{align*}
Since $w(\delta_1)=1$, any minimal separator of $C<D$ has weight at most $1$.

Now, if $u=u'$, then $C= \emptyset$ or $D=\emptyset$, and then $\delta_1=0$ is a separator for $C,D$.

Finally, $w(\delta)\leq 1$ for $\delta$ a minimal separator of $v$.
\subsection{Part 2}

Let $v$ be a node of $T$ of the form $T_{\psi}^{\langle C,D \rangle}$ with two children $v_1$ and $v_2$. Let $\delta_1$ and $\delta_2$ be minimum separators of $v_1$ and $v_2$ respectively. We first prove the straightforward proposition:

\begin{proposition}
    $\delta' = \delta_1 + \delta_2$ is a separator of $v$.
\end{proposition}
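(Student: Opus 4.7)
The plan is to unpack the separator definition and verify it directly. By Definition~\ref{definition:syntax_tree}, a node $v$ with two children must have $sl(v) \in \{\vee, \wedge\}$, and the interpretation labels of its children are obtained by splitting $C$ (resp.\ $D$) into two subfamilies. By symmetry, it suffices to handle one of these two cases; I will do $sl(v) = \vee$ and note that the $\wedge$ case is completely analogous.

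So assume $il(v) = \langle C, D\rangle$, and $il(v_i) = \langle C_i, D\rangle$, where $C = C_1 \cup C_2$ with $C_i = \{(\mathcal A,\alpha)\in C : (\mathcal A,\alpha) \models \psi_i\}$. Fix an arbitrary pair $(\mathcal A,\alpha)\in C$ and $(\mathcal B,\beta)\in D$. I need to exhibit variables $u,u' \in \{min,max,x,y,z\}$ with $u\neq u'$ witnessing the separator conditions for $\delta'$ on this pair. Since $(\mathcal A,\alpha)\in C_1 \cup C_2$, pick $i\in\{1,2\}$ with $(\mathcal A,\alpha)\in C_i$; then $\delta_i$ separates $\langle C_i,D\rangle$, so there exist $u,u'$ satisfying either the $<$-type condition or the distance condition with bound $\delta_i(\{u,u'\})$.

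The final step is to observe that both separator clauses are monotone in the bound: clause~(1), concerning only $<$-types, does not depend on $\delta$ at all, and clause~(2) requires a MIN of two distances to be at most $\delta(\{u,u'\})$, which is preserved if we enlarge $\delta$. Since $\delta_1,\delta_2$ are nonnegative, $\delta_i(\{u,u'\}) \le \delta_1(\{u,u'\}) + \delta_2(\{u,u'\}) = \delta'(\{u,u'\})$ pointwise, so $u,u'$ also witness separation with respect to $\delta'$. This handles every pair in $C \times D$, so $\delta'$ is a separator for $\langle C,D\rangle$. The $\wedge$ case is identical after swapping the roles of $C$ and $D$.

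There is no real obstacle here; the only thing to be careful about is simply that one must pick the correct child based on which subfamily $(\mathcal A,\alpha)$ (or $(\mathcal B,\beta)$ in the $\wedge$ case) belongs to, and then rely on the pointwise monotonicity of the separator conditions. This proposition is the easy half of part~2 of Lemma~\ref{lemma:separator}; the actual weight bound $w(\delta) \le w(\delta_1)+w(\delta_2)$ will follow once we combine this with the fact that $w(\delta)$ is defined through the square root of a sum of nonnegative terms that are each subadditive in $\delta_1,\delta_2$, but that derivation is separate from the proposition being proved here.
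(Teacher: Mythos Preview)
Your proof is correct and follows essentially the same approach as the paper: pick the child $i$ containing $(\mathcal A,\alpha)$ (in the $\vee$ case), use that $\delta_i$ separates there, and then invoke pointwise monotonicity $\delta_i \le \delta'$ to conclude. Your write-up is in fact slightly more explicit about the monotonicity step than the paper's, which simply says ``since $\delta' \geq \delta_1,\delta_2$ it is clear that $\delta'$ \ldots satisfies the separation condition.''
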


\begin{proof}
   Let us fix $(\mathcal{A}, \alpha) \in C $ and $(\mathcal{B}, \beta) \in D$.  Suppose $\psi$ is of the form $\psi_1 \vee \psi_2$. The case $\psi$ of the form $\psi_1 \wedge \psi_2$ is symmetrical.

   By definition \ref{definition:syntax_tree}, there are $C_1, C_2$ such that $\delta_1$ is a separator of $T_{\psi_1}^{\langle C_1,D \rangle}$ and $\delta_2$ is a separator of $T_{\psi_2}^{\langle C_2,D \rangle}$ and with $C_1 \cup C_2 = C$.\\
   Whether $(\mathcal{A}, \alpha) \in C_1$ or $(\mathcal{A}, \alpha) \in C_2$, since $\delta' \geq \delta_1,\delta_2$ it is clear that $\delta'$ defined as above satisfies the separation condition:

   \begin{enumerate}
    \item <-type$(\alpha(u), \alpha(u')) \neq $<-type$(\beta(u), \beta(u'))$  or
    \item both:
    \begin{itemize}
        \item MIN$[\text{d}(\alpha(u), \alpha(u' )) , \text{d}(\beta(u), \beta(u'))] \leq \delta'(\{u,u'\})$ and,
        \item $\text{d}(\alpha(u), \alpha(u' )) \neq \text{d}(\beta(u), \beta(u' ))$.
\end{itemize}
\end{enumerate}
   Therefore $\delta'$ is a separator of $v$.
\end{proof}
We conclude with the upper bounding of the weight by showing that $w(\delta') \leq w(\delta_1)+w(\delta_2)$. By definition \ref{definition:weight}, it is easy to check that both $b(\delta') \leq b(\delta_1) + b(\delta_2)$ and $c(\delta') \leq c(\delta_1) + c(\delta_2)$ and therefore:

\begin{align*}
    w(\delta')^2 &= c(\delta')^2  + b(\delta') \\
                &\leq (c(\delta_1) + c(\delta_2)^2 + b(\delta_1) + b(\delta_2)\\
                & = c(\delta_1)^2 + c(\delta_2)^2 + b(\delta_1) + b(\delta_2) + 2c(\delta_1)c(\delta_2)\\
                &\leq w(\delta_1)^2 + w(\delta_2)^2 + 2 w(\delta_1)w(\delta_2)\\
                &= (w(\delta_1)+w(\delta_2))^2
\end{align*}
\qed

\section{End of the proof of Lemma \ref{lemma:separator} part 3.}
\label{appendix:consequence_weight}
\begin{figure*}[h]
    \centering
    \includegraphics[width=0.8\linewidth]{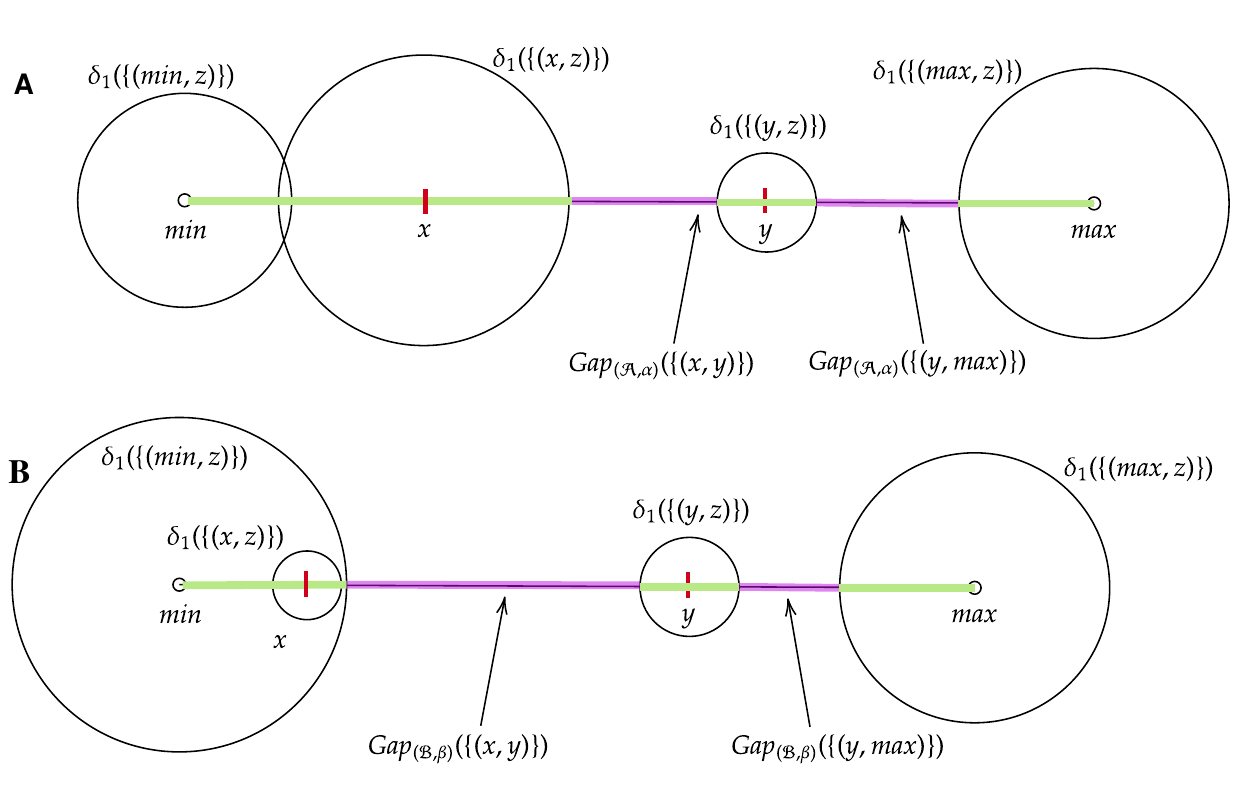}
    \caption{An illustration of the gap variables in a separating case.}
    \label{fig:gap_var}
\end{figure*}
We derive Lemma \ref{lemma:separator} part 3. from the consequence on the separator weight of the key Lemma \ref{key:Lemma}:
\begin{proof}
Suppose $v$ be a node of $T$ that has exactly one child $v_1$, we consider $\delta$ a minimal separator of $il(v)$ and $\delta_1$ be a minimal separator of $<A_1,B_1> := il(v_1)$.\\
We want to show that $w(\delta) \leq w(\delta_1) + t$.
From the definition of a syntax tree we know that either $sl(v) = \neg$ or $sl(v) = Q^{\geq k} u$, for some $Q \in \{\exists, \forall\}$, $k \leq t$,
and $u \in \{x, y, z\}$.\\
Case 1: $sl(v) = \neg $.\\
In this case $il(v) = <B,A>$ and $\delta_1$ is a separator for $il(v)$, therefore  $w(\delta) \leq w(\delta_1) \leq w(\delta_1) + t$.\\
Case 2: $sl(v) = Q^{\geq k} u$.\\
We call $\delta'$ the separator for $il(v)$ defined in Lemma \ref{key:Lemma}. Since $\delta$ is a minimal separator for $il(v)$, $ w(\delta) \leq w(\delta')$. We will show that $w(\delta') \leq w(\delta_1) + k$ to finish the proof.\\
Let us choose $u', u''$ so that $\{x, y, z\} = \{u, u', u''\}$. From the definition of $\delta'$ we get the following:
\begin{itemize}
    \item $c(\delta') = \delta'(\{u', u''\}) \leq c(\delta_1) + k - 1$ 
    \item $\delta'(\{min, max\}) \leq b(\delta_1) + k - 1$
\end{itemize}
\begin{proposition}
\label{proposition:1}
For any $f,g\in  \{x, y, z\}$,\\
$\delta'(\{min, f\}) +\delta'(\{g, max\}) \leq b(\delta_1) + 2c(\delta_1) + 2 (k-1)$.
\end{proposition}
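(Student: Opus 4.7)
The plan is a straightforward case analysis driven by the definition of $\delta'$ from Lemma~\ref{key:Lemma}, split on whether $f$ or $g$ equals the quantified variable $u$. If $f = u$, then by construction $\delta'(\{min, f\}) = \delta'(\{u, u'\}) = 0$ for the matching $u'$, and similarly if $g = u$ then $\delta'(\{g, max\}) = 0$. In either of these ``degenerate'' cases only one $\delta'$ term contributes and the estimate is immediate: for instance if $f = u$ and $g \neq u$ then
\[
\delta'(\{g, max\}) \;\leq\; \max\bigl\{\delta_1(\{g,max\}),\, \delta_1(\{g,u\}) + \delta_1(\{u,max\}) + (k-1)\bigr\} \;\leq\; b(\delta_1) + c(\delta_1) + (k-1),
\]
which is comfortably below the target bound.

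The substantive case is $f, g \in \{x,y,z\}\setminus\{u\}$, where both $\delta'$-values are maxima of two quantities: either a direct $\delta_1$-entry, or a ``$u$-detour'' that introduces a $+(k-1)$ penalty. Expanding this produces four subcases. The critical one is when both maxima are realized by the $u$-detour, giving a sum
\[
\delta_1(\{min,u\}) + \delta_1(\{u,f\}) + \delta_1(\{g,u\}) + \delta_1(\{u,max\}) + 2(k-1).
\]
Here I would regroup as $(\delta_1(\{min,u\}) + \delta_1(\{u,max\})) + (\delta_1(\{u,f\}) + \delta_1(\{u,g\})) + 2(k-1)$. The first parenthesis is at most $b(\delta_1)$ by taking $u' = u'' = u$ in the definition of $b$, and the second is at most $2c(\delta_1)$, so the subcase delivers exactly $b(\delta_1) + 2c(\delta_1) + 2(k-1)$. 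The other three subcases mix at most one $u$-detour with one direct $\delta_1$-entry, so they yield at most $b(\delta_1) + c(\delta_1) + (k-1)$, strictly inside the bound.

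The only subtlety I anticipate lies in the subcase $f = g$ inside the worst branch above: $c(\delta_1)$ is defined as a maximum over pair-sums $\delta_1(p) + \delta_1(q)$ with $p \neq q$, so one cannot directly apply it to $2\delta_1(\{u, f\})$. The remedy is to observe that $c(\delta_1) \geq \delta_1(\{u,f\})$ individually, obtained by pairing $\{u,f\}$ with any other pair and using $\delta_1 \geq 0$; this gives $2\delta_1(\{u,f\}) \leq 2c(\delta_1)$ and unifies the $f = g$ and $f \neq g$ cases into a single bound. With that small observation in hand, the argument is a uniform expansion of the maxima and an application of the two structural inequalities $\delta_1(\{min,u\}) + \delta_1(\{u,max\}) \leq b(\delta_1)$ and $\delta_1(\{u,f\}) + \delta_1(\{u,g\}) \leq 2c(\delta_1)$, and is essentially the only nontrivial inequality needed before assembling the weight bound $w(\delta') \leq w(\delta_1) + t$ in the concluding computation.
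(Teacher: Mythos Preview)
Your proof is correct and follows essentially the same case analysis as the paper: split on whether $f$ or $g$ equals the quantified variable $u$, and in the nondegenerate case expand the two maxima into four subcases, with the worst one regrouped as $(\delta_1(\{min,u\}) + \delta_1(\{u,max\})) + (\delta_1(\{u,f\}) + \delta_1(\{u,g\})) + 2(k-1) \leq b(\delta_1) + 2c(\delta_1) + 2(k-1)$. Your explicit treatment of the $f = g$ subtlety (using $\delta_1(\{u,f\}) \leq c(\delta_1)$ via nonnegativity) is a point the paper leaves implicit, so your version is slightly more careful there.
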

\begin{proof}
\begin{itemize}
    \item Suppose $f = u$  or $g=u$, then  $\delta'(\{min, f\}) = 0$ or $\delta'(\{g, max\}) = 0$. And so $\delta'(\{min, f\}) +\delta'(\{g, max\}) \leq MAX [b(\delta_1), b(\delta_1) + c(\delta_1) + k-1]$ by definition $\ref{definition:weight}$ and Lemma \ref{key:Lemma}.
    \item Otherwise we get  $f,g\in  \{x, y, z\} \setminus \{u\}$  and:
    \begin{itemize}
        \item $\delta'(\{min, f\})= MAX ( \delta_1(\{min, f\})), \delta_1(\{min, u\})) + \delta_1(\{u, f\}) + k-1$
        \item $\delta'(\{max, g\})= MAX ( \delta_1(\{max, g\})), \delta_1(\{max, u\})) + \delta_1(\{u, g\}) + k-1$
    \end{itemize}
So:
\begin{align*}
\delta'(\{mi&n, f\}) + \delta'(\{max, g\}) \leq \\
    &MAX[\delta_1(\{min, f\}) + \delta_1(\{max, g\}), \\
    &\delta_1(\{min, f\}) + \delta_1(\{max, u\})) + \delta_1(\{u, g\}) + k-1,\\
    &\delta_1(\{min, u\}) + \delta_1(\{u, f\}) + k-1 + \delta_1(\{max, g\}),\\
    &\delta_1(\{min, u\}) + \delta_1(\{u, f\})+ \delta_1(\{max, u\}) + \\
    &\delta_1(\{u, g\}) + 2(k-1)]
\end{align*}
Therefore:
$$\delta'(\{min, f\}) + \delta'(\{g, max\}) \leq b(\delta_1) + 2c(\delta_1) + 2 (k-1)$$.
\end{itemize}
\end{proof}
From Proposition \ref{proposition:1} we can upper bound the border distance: \begin{multline*}
    b(\delta') = MAX[ \delta'(\{min, f\}) + \delta'(\{g, max\}), \delta'(\{min, max\})]\\
    \leq b(\delta_1) + 2c(\delta_1) + 2 (k-1)
\end{multline*}
Finally we can conclude:
\begin{align*}
w(\delta')^2 &= c(\delta')^2 + b(\delta') \\\
&\leq (c(\delta_1) + k-1 )^2 + b(\delta_1) + 2c(\delta_1) + 2(k-1)\\
&= c(\delta_1)^2 + b(\delta_1) + 2kc(\delta_1) + k^2-1\\
&\leq w(\delta_1)^2 + 2 kw(\delta_1) + k^2
= (w(\delta_1) + k)^2.    
\end{align*}
As a final consequence:
$$ w(\delta) \leq w(\delta') \leq w(\delta_1) + k \leq w(\delta_1) + t.$$
\end{proof}

\section{Figure for gap variables illustration Definition}
\label{appendix:gap_var}
We describe the illustration of Fig. \ref{fig:gap_var}. 
We associate a ball with center $\alpha(u)$ of radius $\delta_1({u,v})$ for every $u \in \{x,y,min,max\}$ on $\mathcal{A}$ (resp. $\beta(u)$ on $\mathcal{B}$). 
Each ball associated to variable $u$ accounts for the ability of the pair $\{u,z\}$ to separate. The regions of the linear orders that are not covered by any of the balls are the gaps, in purple on the Figure.\\
In the example on $\mathcal{A}$, the $\{min,x\}$ gap is empty, the gap variables for the $\{x,y\}$ gap are $(x,y)$ and the gap variables for the $\{y,max\}$ gap are $(y,max)$.\\
On $\mathcal{B}$, the $\{min,x\}$ gap is empty. The gap variables for the $\{x,y\}$ gap on $\mathcal{B}$ are $(min,y)$, as the $\delta_1(\{min,z\})$ ball covers more of the gap on the left than the   $\delta_1(\{min,z\})$ ball. Finally on $\mathcal{B}$ the gap variables for the $\{y,max\}$ gap are $(y,max)$.

\end{document}